\newtheorem{theorem}{Theorem}[]
\newtheorem{lemma}[]{Lemma}
\newtheorem{definition}{Definition}
\newcommand{\vast}{\bBigg@{3}}
\newcommand{\Vast}{\bBigg@{4}}
\begin{document}

\title{Impact of Urban Street Geometry on the Detection Probability of Automotive Radars}



\author{
\IEEEauthorblockN{\small Mohammad Taha Shah, Ankit Kumar, \& Gourab Ghatak}
\IEEEauthorblockA{
\textit{\small Indian Institute of Technology Delhi}\\
\small New Delhi, India\\
\small \{bsz218183, bsy227531, gghatak\}@iitd.ac.in}
\and
\IEEEauthorblockN{\small Shobha Sundar Ram}
\IEEEauthorblockA{
\textit{\small Indraprastha Institute of Information Technology Delhi}\\
\small New Delhi, India\\
\small shobha@iiitd.ac.in}
}
\normalsize
\maketitle

\begin{abstract}
Prior works have analyzed the performance of millimeter wave automotive radars in the presence of diverse clutter and interference scenarios using stochastic geometry tools instead of more time-consuming measurement studies or system-level simulations. In these works, the distributions of radars or discrete clutter scatterers were modeled as Poisson point processes in the Euclidean space. However, since most automotive radars are likely to be mounted on vehicles and road infrastructure, road geometries are an important factor that must be considered. Instead of considering each road geometry as an individual case for study, in this work, we model each case as a specific instance of an underlying Poisson line process and further model the distribution of vehicles on the road as a Poisson point process - forming a Poisson line Cox process. 
Then, through the use of stochastic geometry tools, we estimate the average number of interfering radars for specific road and vehicular densities and the effect of radar parameters such as noise and beamwidth on the radar detection metrics. The numerical results are validated with Monte Carlo simulations. 
\end{abstract}

\begin{IEEEkeywords}  
Stochastic geometry, Poisson line processes, Poisson line Cox process, Poisson point process, automotive radar.
\end{IEEEkeywords}
\section{Introduction}
There has been significant recent interest in using stochastic geometry tools to model and analyze the performance of automotive radar systems ~\cite{al2017stochastic, munari2018stochastic, ren2018performance, park2018analysis, fang2020stochastic,ram2020estimating,ram2021optimization,ram2022estimation,ram2022optimization,singhal2023leo}. Millimeter wave radio channels are extremely complex especially in high mobility environments since they are characterized by significant clutter with short coherence times and interference arising from other radars. Further, there is considerable spatial randomness in the positions and densities of discrete clutter scatterers and interference sources, resulting in significant variations in the system's performance based on the position of the radar, the target scattering response, and the channel conditions. Instead of treating each radar deployment as a separate problem, the prior works have approximated the distribution of the discrete clutter scatterers and interference sources as point processes and utilized the stochastic geometry-based mathematical framework to draw system-level insights regarding radar performance. The main advantage of this approach is that the methodology avoids the requirements of laborious system-level simulations or measurement data collection. 
For example, the authors in \cite{al2017stochastic} modeled the distribution of vehicular automotive radars as a Poisson point process to derive the mean \ac{SINR} from which the radar detection metrics are estimated. Munari {\it et al.}~\cite{munari2018stochastic} used the strongest interferer approximation to determine the radar detection range and false alarm rate, while the radar detection probability was investigated in~\cite{fang2020stochastic}. 
In \cite{ram2020estimating}, the authors modeled the discrete clutter scatterers encountered in monostatic radar scenarios as a Poisson point process and 
estimated the radar detection metrics. They further utilized this framework in \cite{ram2021optimization} to optimize pulse radar parameters for maximizing the probability of detection. They investigated the bistatic radar framework for terrestrial \cite{ram2022estimation} and space-based scenarios in \cite{singhal2023leo} and further extended the model to optimize the time resource management between radar and communications in integrated sensing and communication systems \cite{ram2022optimization}.

In all prior works, the spatial distribution of the radars has been assumed to be entirely random in the Cartesian space. However, in real-world conditions, automotive radars typically mounted on vehicles are more likely to be randomly distributed on roads than off-road sites. Hence, in vehicular networks, an additional aspect of the network geometry that needs consideration is the structure of the streets in a city, which is largely ignored in stochastic geometry-based works. Further, a street-based characterization of radar performance can provide valuable system insights for vehicular applications involving critical control operations such as lane-change assistance and the dissemination of basic safety messages at intersections. In this paper, we will examine how street geometry impacts the interference characteristics of radar using the concept of line processes. In contrast to point processes, which model the locations of isolated events or objects, line processes are random collection of lines in a two-dimensional Euclidean plane and have been used to describe the spatial distribution of roads, rivers, fiber optic cables, and fault lines in different works. Currently, there is limited work on using line processes for modeling automotive radars. Schipper {\it et al}~\cite{schipper2015simulative} analyzed the interference between two radars while modeling the distribution of vehicles along a roadway based on traffic flow patterns. Recently, in~\cite{ghatak2022radar}, authors presented a fine-grained radar detection analysis, highlighting optimal channel access methods. However, the work primarily focuses on a specific highway scenario, potentially limiting its broader applicability.

In this study, we model the network of streets as a Poisson line process. This assumption is especially useful in modeling the road geometries of dense urban areas. Here, at each time instant, the distribution of streets encountered by an ego radar is spatially diverse. Therefore, each case is treated as an instance of an underlying Poisson line process. Similarly, we model the vehicles on each street as a Poisson point process - which is a collection of random points on each line. The choice of the double stochastic process - termed the \ac{PLCP} - is dictated by its mathematical simplicity, spatial independence, and alignment with previous stochastic geometry models. This methodology allows for easy and tractable vehicular radar performance analysis for random distributions of streets and vehicles in terms of the parameters of the network geometry. The key contributions of our work are as follows: First, leveraging the popular \ac{PLCP} model of stochastic geometry, we derive the statistics 
of the distance between interfering and ego radar, which has previously not been studied in the literature. Second, based on the proposed model, we derive the radar detection probability for the ego radar and study the impact of channel noise, the density of the interfering radars, and beamwidth on the detection metrics. Third, we obtain the lower bound on the number of successful detections and use it to find the optimal beamwidth for which the number of successful detections is maximized. Finally, we determine the optimal beamwidth for which the number of successful detections is maximized for specific intensity of vehicles and maximum unambiguous range. 

The paper is organized as follows. The following section presents the system model for the stochastic geometry framework. Sections \ref{sec:AvgPLCP} and \ref{sec:RadarProb} discuss the key theoretical findings while the numerical results are given in Section \ref{sec:Results} followed by the conclusion in Section \ref{sec:Conclusion}.

\section{System Model}
\label{sec:SysModel}

\begin{figure}[t]
\centering
\subfloat[]
{\includegraphics[trim={0cm 0cm 0cm 1.2cm},clip,width=0.3\textwidth]{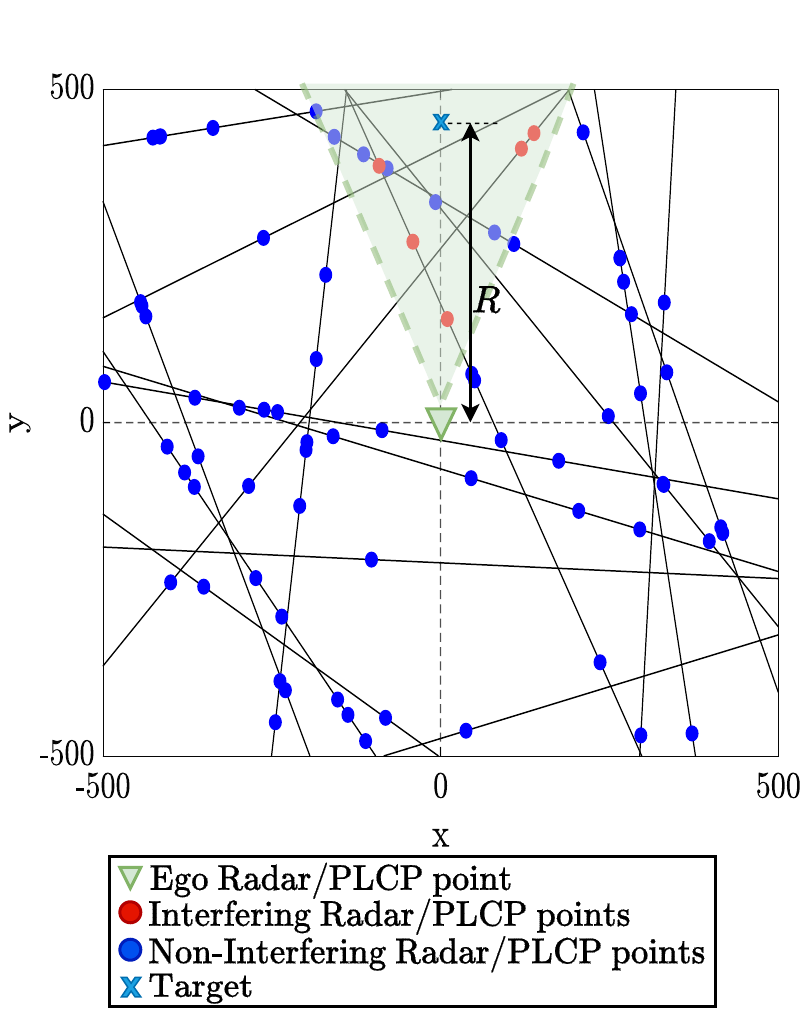}
\label{fig:fig_1}}
\hfill
\subfloat[]
{\includegraphics[trim={4.4cm 4.3cm 4.55cm 1.85cm},clip,width = 0.4\textwidth]{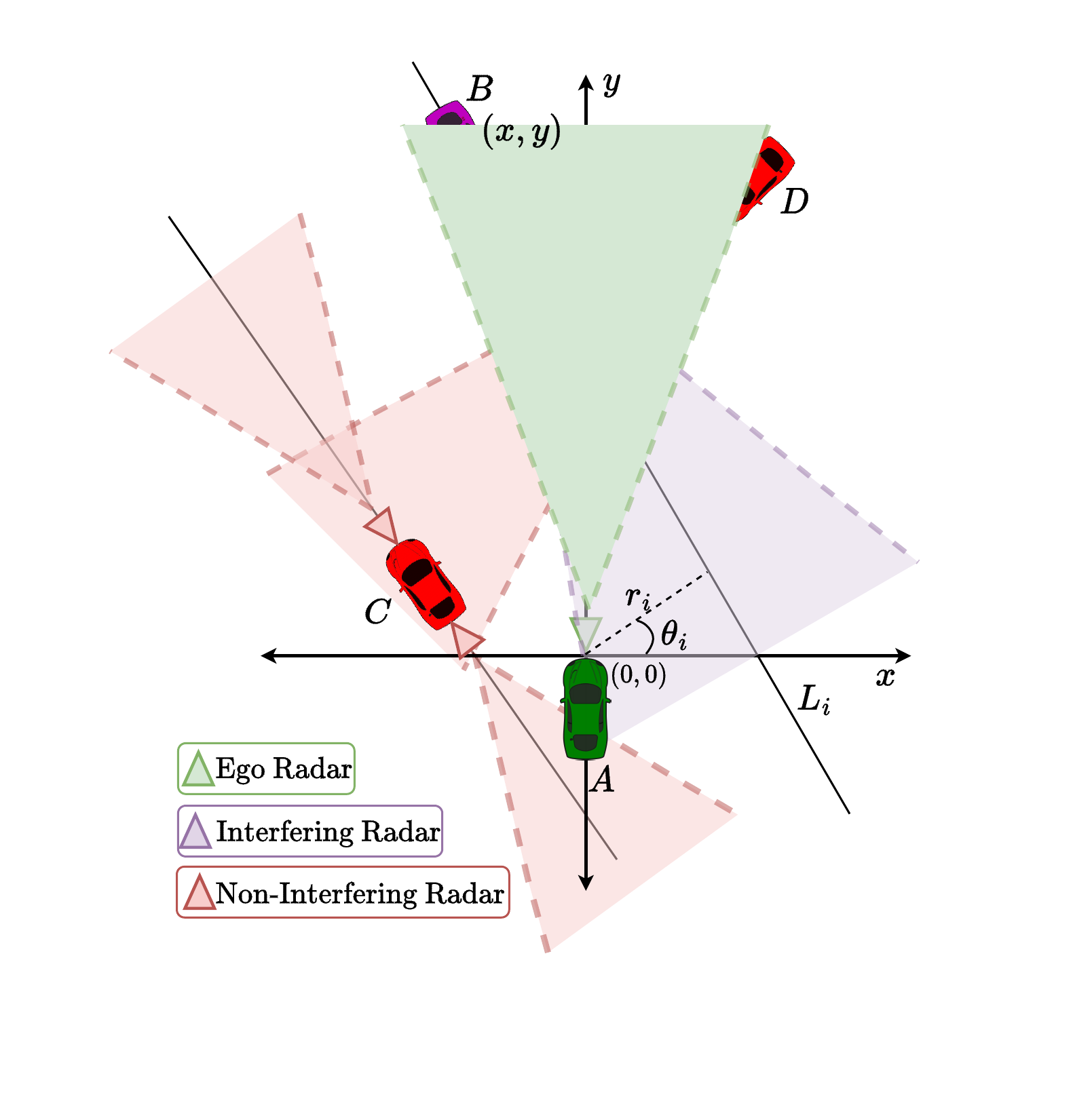}
\label{fig:fig2}}
\caption{(a) Illustration of a scenario showing interfering and non-interfering radar beams w.r.t. ego radar, and (b) A realization of \ac{PLCP} having $\lambda_{\rm L} = 0.005 \,\rm{m}^{-2}$ and $\lambda_{\rm P} = 0.005 \,\rm{m}^{-1}$, with ego radar present at origin.}
\label{fig:result_0_1} 
\end{figure}

\subsection{Network Geometry}
We consider a network of streets modeled as a homogeneous \ac{PLP}, $\mathcal{P} = \{L_1, L_2, \dots\}$, i.e., a stochastic set of lines in the Euclidean plane with the Cartesian directions denoted by $x$ and $y$, respectively. Any line, $L_i$, of $\mathcal{P}$ is uniquely characterized by its distance $r_i$ from the origin and the angle $\theta_i$ between the normal and the $x$-axis. The pair of parameters $(\theta_i,r_i)$ corresponds to a point, $\mathbf{q}_i$, in the representation space $\mathcal{D} \equiv [0,2\pi) \times (0,\infty)$, and is called the generating point of the line $L_i$ in the Euclidean plane. Thus, there is a one-to-one correspondence between $\mathbf{q}_i, i = 1:I$ in $\mathcal{D}$ and the lines, $L_i, i = 1:I$ in $\mathbb{R}^2$. The number of generating points, $I$, in any $S \subset \mathcal{D}$ is driven by the Poisson distribution with parameter $\lambda_{\rm L}|S|$, where $|S|$ represents the Lebesgue measure of $S$. Our analysis focuses on the perspective of an automotive radar mounted on a typical ego vehicle located at the origin of the two-dimensional plane, as shown in Fig.~\ref{fig:fig_1}. The radar has a half-power beamwidth of $\Omega_{\rm B}$, and the target is located at a distance of $R$ from the radar on the same street.

As per the Palm distribution of a \ac{PLP}, we assume that the street with the ego radar and target is represented by $L_1$ with parameters $(\theta_1,r_1)=(0,0)$, while the statistics of the rest of the process remain unaltered~\cite{dhillon2020poisson}. The locations of the vehicles with mounted radars on $L_{i}\textsuperscript{th}$ street follow a one-dimensional \ac{PPP}, $\Phi_{L_i}$, with $\lambda_{\rm P}$ intensity. Note that the \ac{PPP} of vehicles on any street is independent of the corresponding distributions on the other streets. Hence, the complete distribution of the vehicles in the space is a homogeneous \ac{PLCP} $\Phi$, on the domain $\mathcal{P}$ where the PLCP is defined as 
    $\Phi = \bigcup_{L_i \in \mathcal{P}} \Phi_{L_i}$.
Therefore, $\Phi$ is a collection of Poisson points on Poisson lines. Fig~\ref{fig:fig_1} shows one realization of the \ac{PLCP}, where the ego radar is present at the origin. The other \ac{PLCP} points representing automotive radars may or may not contribute to the interference. The red dots represent the interfering radars, and the blue dots represent the non-interfering radar.
The ego radar will experience interference if and only if both the ego radar and interfering radar come into each other's radar beams simultaneously. For example, in Fig~\ref{fig:fig2}, we see the radar beams emitted by four vehicles, namely \textit{A}, \textit{B}, \textit{C}, and \textit{D}. The vehicle \textit{A}, located at the origin $\mathbf{0}$, represents the ego radar, and its radar sector is depicted in green. The beams of the interfering and non-interfering vehicles are represented by purple and red, respectively. Therefore, ego radar experiences interference only from \textit{B}, and not from  \textit{C} and \textit{D}. 

\subsection{Channel Model and SINR}
Let the transmit power be $P$ and the path-loss exponent be $\alpha$. The target is at a distance $R$ and is assumed to have a fluctuating radar cross-section, $\sigma_{\mathbf{c}}$, modeled as an exponential random variable with a mean of $\bar{\sigma}$~\cite{shnidman2003expanded}. Let $G_t$ be the gain of the transmitting antenna and $A_{\rm e}$ the effective area of the receiving antenna aperture. Then the ego radar receives the reflected signal power from the target vehicle with strength
    $S = \gamma\sigma_{\mathbf{c}} P R^{-2\alpha}$
where $\gamma = \frac{G_{\rm t}}{(4\pi)^2}A_{\rm e}$. 
Let the coordinates of any \ac{PLCP} point in the Euclidean plane be denoted as, $\mathbf{w} = (x,y)$, such that $x\cos\theta_i + y\sin\theta_i = r_i$ for $(\theta_i,r_i) \in \mathcal{D}$. Due to the millimeter channel effects, the signals undergo multi-path fading, modeled as Rayleigh fading, with a variance of 1. The interference at the ego radar due to any one interfering radar, modeled as one of the \ac{PLCP} points, is then given by,
    $\mathbf{I} = P \gamma h_{\mathbf{w}} ||\mathbf{w}||^{-\alpha}$,
where $h_{\mathbf{w}}$ is the fading factor, assuming that all the automotive radars share the same power and gain characteristics. 
The SINR at the ego radar is then given by
\begin{align}
    {\rm SINR} = \frac{\gamma \sigma_{\mathbf{c}} P R^{-2\alpha}}{N + \sum_{\mathbf{w} \in \Phi}4\pi \gamma P h_{\mathbf{w}} ||\mathbf{w}||^{-\alpha}},
\end{align}
where $N_0 = N_{\rm d} W$ is the noise power with $N_{\rm d}$ the noise power density, and $W$ the bandwidth.

\begin{figure}[t]
    \centering
    \includegraphics[trim={7cm 12.9cm 14.5cm 7cm},clip,width = 0.35\textwidth]{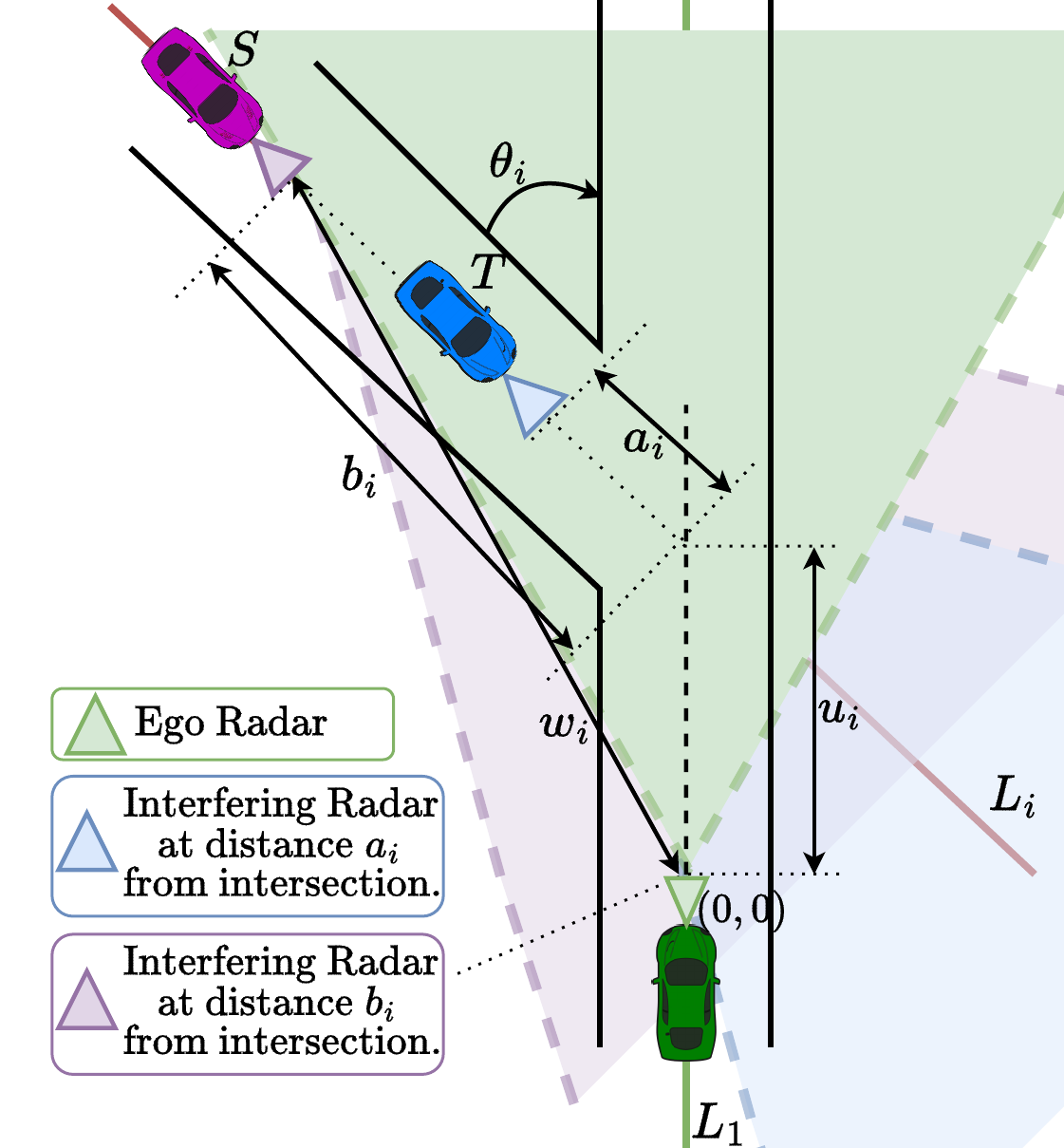}
    \caption{Illustration of a scenario where two radars are present at the edge point of the line $L_i$ inducing interference.}
    \label{fig:fig3}
\end{figure}
\subsection{Interfering Distance}
Our objective is to determine the impact of the interfering radars located on other streets on the detection performance of the ego radar. To do so, \emph{we determine the fraction of $L_i, \forall\, L_i \in \mathcal{P}$ wherein if a radar is present, it will contribute to the interference experienced at the ego radar.} The distance from the ego radar to the intersection of the $L_i\textsuperscript{th}$ line is denoted by $u_i = \frac{r_i}{\sin\theta_i}$, where the angle of intersection is equal to the generating angle $\theta_i$ of the intersecting line. The distance between the interfering radar on $L_i$ from the intersection point is $v_i$ which is referred to as the interfering distance if the other radar and ego radar mutually interfere. Fig~\ref{fig:fig3} illustrates an ego radar with a green radar beam on $L_1$ and radars mounted on other vehicles, \textit{S} and \textit{T}, on $L_i$, with red and purple beams. Specifically, \textit{S} and \textit{T} are at a distance $v_i = b_i$ and $v_i = a_i$ from the intersection point, respectively. These two distances ($a_i$ and $b_i$) represent the bounds of $v_i$ within which another radar will interfere with the ego radar and are determined by the following theorem.
\begin{theorem}
\label{th:theo1}
For any $L_i$ line with parameters $(\theta_i,r_i)$ intersecting line $L_1$ at a distance $u_i$ from the origin, the maximum ($a_i$) and the minimum ($b_i$) distances of the interfering radars from the point of intersection, are given as:
\begin{align}
    a_i \!\!=\!\! 
    \begin{cases}
        -u_i\left(\cos\theta_i - \sin\theta_i\cot\Omega_{\rm B} \right); \hspace*{0cm} \mathrm{for}\; u_i \geq 0, \;\&\; \theta_i \leq \Omega_{\rm B} \\
        \frac{u_i\sin(\theta_i-\Omega_{\rm B})}{\sin\Omega_{\rm B}}; \hspace*{0cm} \mathrm{for}\; u_i \geq 0, \;\&\; \Omega_{\rm B} < \theta_i \leq 2\Omega_{\rm B}\\
        \frac{u_i\sin(\theta_i+\Omega_{\rm B})}{\sin\Omega_{\rm B}}; \hspace*{0cm} \mathrm{for}\; u_i \geq 0, \;\& \\
        \hspace*{3.5cm}\pi-2\Omega_{\rm B} \leq \theta_i \leq \pi - \Omega_{\rm B}\\
        u_i\left(\cos\theta_i - \sin\theta_i\cot\Omega_{\rm B} \right); \hspace*{0cm}\mathrm{for}\; u_i \geq 0, \;\& \\
        \hspace*{3.5cm}\pi-\Omega_{\rm B} \leq \theta_i \leq \pi\\
        \frac{u_i}{\cos\theta_i+\sin\theta_i\cot\Omega_{\rm B}}; \hspace*{0cm} \mathrm{for}\; u_i < 0, \;\&\; \pi < \theta_i \leq \pi + \Omega_{\rm B}\\
        \frac{-u_i}{\cos\theta_i+\sin\theta_i\cot\Omega_{\rm B}}; \hspace*{0cm} \mathrm{for}\; u_i < 0, \;\&\; 2\pi - \Omega_{\rm B} \leq \theta_i <2\pi\\
        0; \hspace*{2.25cm} \mathrm{otherwise}
    \end{cases}
    \label{eq:l_1}
\end{align}
\begin{align}
    b_i \!\!=\!\!
    \begin{cases}
        \infty; \hspace*{0.5cm} \mathrm{for}\; u_i \geq 0, \; \sin\theta_i < \sin\Omega_{\rm B}, \;\&\; \theta_i\leq\pi \\
        \frac{u_i\tan\Omega_{\rm B}}{\sin\theta_i - \cos\theta_i \tan\Omega_{\rm B}};\hspace*{0.07cm} \mathrm{for}\; u_i \geq 0 \;\&\; \Omega_{\rm B} < \theta_i \leq 2\Omega_{\rm B}\\
        \frac{u_i\tan\Omega_{\rm B}}{\sin\theta_i + \cos\theta_i \tan\Omega_{\rm B}}; \hspace*{0.07cm} \mathrm{for}\; u_i \geq 0, \;\& \\
        \hspace*{3.8cm}\pi-2\Omega_{\rm B} \leq \theta_i \leq \pi - \Omega_{\rm B}\\
        \infty; \hspace*{0.5cm} \mathrm{for}\; u_i < 0, \; \sin\theta_i > \sin\Omega_{\rm B}, \;\&\; \pi<\theta_i \leq 2\pi\\
        0; \hspace*{0.7cm} \mathrm{otherwise}
    \end{cases}
    \label{eq:l_2}
\end{align}
\end{theorem}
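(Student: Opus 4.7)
The plan is to parameterize each candidate interfering radar on $L_i$ by its signed offset $v_i$ from the intersection point with $L_1$, encode the mutual beam-visibility requirement as two simultaneous angular constraints (one per radar), and extract the resulting interval for $v_i$ by a case analysis on $(u_i, \theta_i)$. I would place the ego radar at the origin with $L_1$ along the $x$-axis oriented toward its target, so that the ego beam is a sector of half-angle $\Omega_{\rm B}$ about the positive $x$-direction, and assign to an interferer on $L_i$ the Cartesian position $\mathbf{w}(v_i) = (u_i + v_i\cos\theta_i,\, v_i\sin\theta_i)$, with its own beam a sector of half-angle $\Omega_{\rm B}$ about the direction of $L_i$.

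The first substantive step is to translate the two visibility requirements into algebraic inequalities in $v_i$. Condition (A), ``the ego radar sees the interferer'', forces $\mathbf{w}(v_i)$ into the ego sector; applying the law of sines to the triangle with vertices at the origin, the intersection point, and $\mathbf{w}(v_i)$ yields a bound of the form $v_i \leq u_i\sin(\theta_i\pm\Omega_{\rm B})/\sin\Omega_{\rm B}$, with the sign fixed by which edge of the ego sector is active. Condition (B), ``the interferer sees the ego radar'', forces the origin into the interferer's sector; resolving the vector from $\mathbf{w}(v_i)$ to the origin along and normal to $L_i$ and bounding the ratio of the two components by $\tan\Omega_{\rm B}$ yields bounds of the two recognizable forms $u_i(\cos\theta_i\pm\sin\theta_i\cot\Omega_{\rm B})$ and $u_i\tan\Omega_{\rm B}/(\sin\theta_i\pm\cos\theta_i\tan\Omega_{\rm B})$, depending on which edge of the interferer's sector is active. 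Intersecting the half-lines in $v_i$ produced by (A) and (B) gives the mutual-interference interval, whose endpoints are $b_i$ (lower) and $a_i$ (upper).

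The second step is a careful case split in three attributes: the sign of $u_i$ (intersection on the positive or negative $x$-side of the ego radar), the location of $\theta_i$ relative to the thresholds $\Omega_{\rm B}$, $2\Omega_{\rm B}$, $\pi-2\Omega_{\rm B}$, $\pi-\Omega_{\rm B}$, and the forward orientation of the interferer's beam along $L_i$. In each cell of the partition I would identify which of the four candidate expressions is the binding upper bound (giving $a_i$) and which is the binding lower bound (giving $b_i$), verify that the remaining constraints are automatic, and assign the degenerate value $a_i = 0$ (or $b_i = 0$) in the ``otherwise'' cells where the two sectors cannot simultaneously contain the opposing radar for any $v_i$. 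The nearly parallel slots $\sin\theta_i < \sin\Omega_{\rm B}$ produce $b_i = \infty$, because the binding edge of the interferer's sector never crosses $L_1$ on the relevant side, so the lower-bound half-line is empty and the interference interval collapses.

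The main obstacle is bookkeeping rather than insight: the trigonometric inequalities are individually elementary, but tracking sign conventions, correctly pairing each $\theta_i$ range with the admissible forward orientation of the interferer along $L_i$, and checking continuity of $a_i$ and $b_i$ across the boundary angles $\Omega_{\rm B}$, $2\Omega_{\rm B}$, $\pi-2\Omega_{\rm B}$, $\pi-\Omega_{\rm B}$ all demand a disciplined case table, which I would construct row by row with a small geometric sketch for each entry to guard against missed or double-counted configurations.
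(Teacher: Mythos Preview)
Your approach is essentially the paper's: a case analysis on the sign of $u_i$ and on $\theta_i$ relative to the thresholds $\Omega_{\rm B},\,2\Omega_{\rm B},\,\pi-2\Omega_{\rm B},\,\pi-\Omega_{\rm B}$, with the interval endpoints extracted by elementary trigonometry; the paper simply presents three representative geometric sketches (line stays inside the sector, line exits the sector, intersection behind the ego radar) rather than abstracting your two mutual-visibility constraints (A) and (B), but the resulting computations coincide. One small bookkeeping point: you assign $b_i$ as the lower endpoint and $a_i$ as the upper, whereas the paper's convention is the reverse (as seen from $b_i=\infty$ in the near-parallel case and the integral $\int_a^b$ in Theorem~\ref{th:theo3}).
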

\begin{proof}
The proof of the above theorem can easily be derived by simple trigonometric manipulations and tricks. To distinguish between various scenarios, it is necessary to ascertain the distance of the point of intersection from the ego radar, as well as the angle of intersection. In Eq~\eqref{eq:l_1} and~\eqref{eq:l_2}, we see that it consists of two major cases when the intersection is happening ahead of ego radar, i.e., $u_i \geq 0$ and the case when the intersection is happening behind the ego radar, i.e., $u_i<0$. For the first case, 4 sub-cases depend on the intersecting angle $\theta_i$. The first two sub-cases are when $\theta_i \leq \frac{\pi}{2}$ and other two are for when $\frac{\pi}{2} < \theta_i \leq \pi$. To better explain the proof, we have divided it into three parts of 3 major scenarios it encounters.

\begin{figure*}[t]
\centering
\subfloat[]
{\includegraphics[trim={1.7cm 3cm 1.2cm 1.3cm},clip,width=0.25\textwidth]{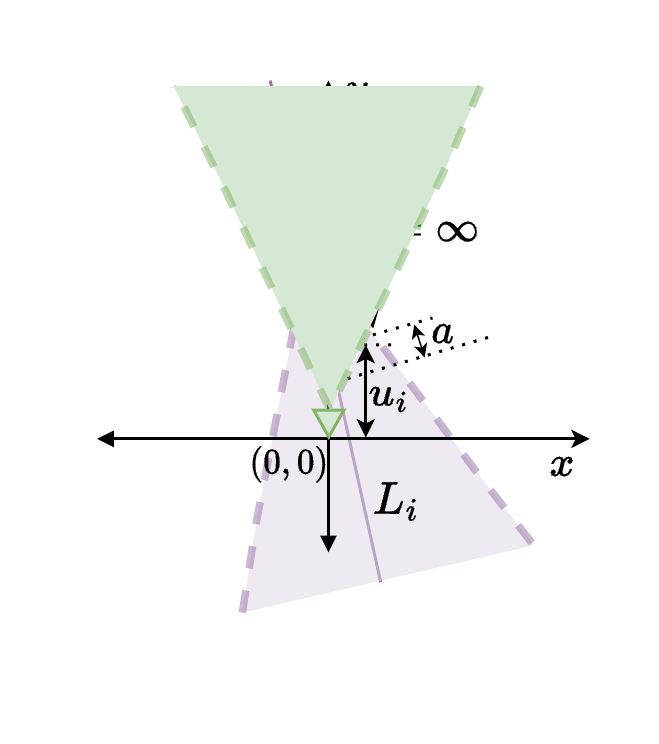}
\label{fig:SC_1}}
\hfill
\subfloat[]
{\includegraphics[trim={0.6cm 2.8cm 1.4cm 1.4cm},clip,width=0.25\textwidth]{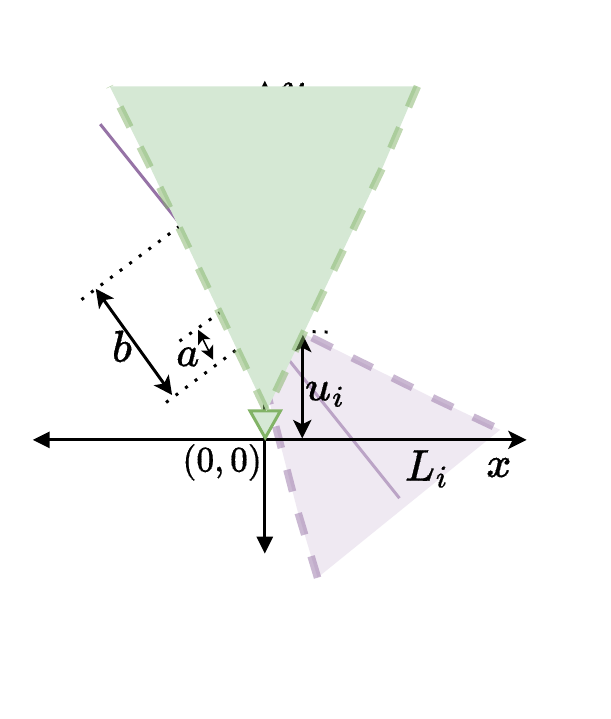}
\label{fig:SC_2}}
\hfill
\subfloat[]
{\includegraphics[trim={2cm 1.1cm 1.4cm 1.3cm},clip,width=0.25\textwidth]{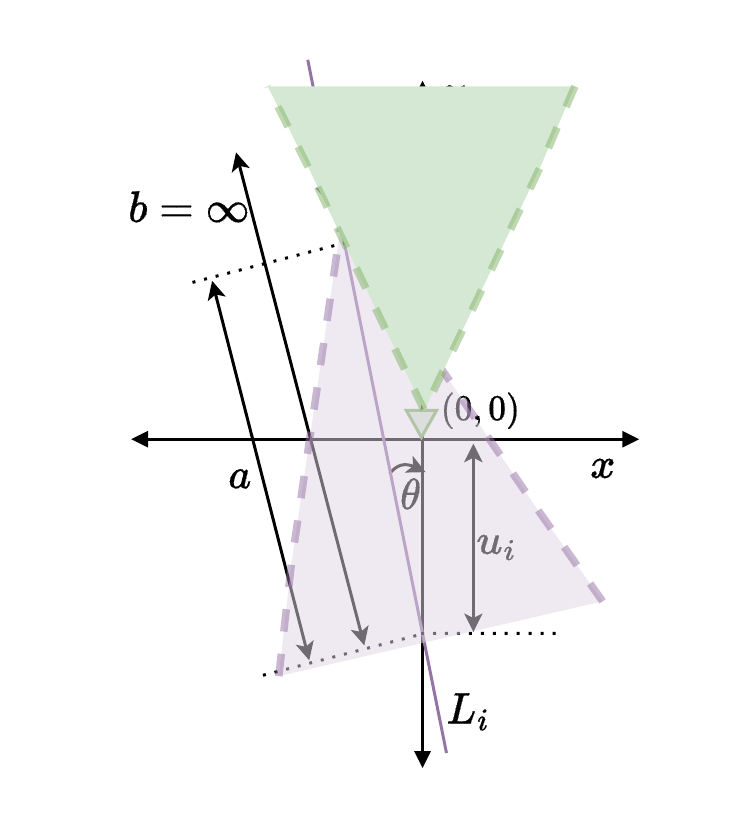}
\label{fig:SC_3}}
\caption{(a) Case 1: When the intersecting line infinitely inside the radar sector, (b) Case 2: When the intersecting line cuts out of the radar sector, and (c) Case 3: When the intersecting line intersects behind the origin.}
\end{figure*}

\textbf{Case 1:} $u_i \geq 0$, and $\theta_i \leq \Omega_{\rm B}$ or $\pi-\Omega_{\rm B} < \theta_i \leq\pi$.\par
In this scenario, we consider the case when the interfering line $L_i$ extends infinitely within the radar sector, as shown in Fig.~\ref{fig:SC_1}.
If the intersecting angle is less than half beamwidth, line $L_i$ will always remain inside the radar sector, thus the upper limit of interfering distance $v_i$, i.e., $b_i = \infty$. The distance $a_i$ is measured from the point of intersection to a point where both radars create mutual interference. Using simple trigonometric formulas, we get $a_i$ as
\begin{align*}
    a_i = -u_i\left(\cos\theta_i - \sin\theta_i\cot\Omega_{\rm B} \right)
\end{align*} 
for $\theta_i \leq \Omega_{\rm B}$, and for $\pi-\Omega_{\rm B} < \theta_i \leq\pi$
\begin{align*}
     a_i = u_i\left(\cos\theta_i - \sin\theta_i\cot\Omega_{\rm B} \right)
\end{align*}

\textbf{Case 2:} $u_i \geq 0$, and $ \Omega_{\rm B} < \theta_i \leq 2\Omega_{\rm B}$ or $\pi-2\Omega_{\rm B} \leq \theta_i \leq \pi - \Omega_{\rm B}$.\par
In this scenario, the interfering line $L_i$ will cut out of the radar sector, i.e., it will not lie infinitely within the radar sector as shown in Fig~\ref{fig:SC_2}.
This scenario occurs if the intersecting angle is greater than half beamwidth and less than beamwidth. An upper limit exists on the value of $v_i$, because at some distance from the intersection point, interfering radar will not lie in the radar sector. Likewise, $a_i$ is the nearest distance from the intersection point after which interference cannot take place, using trigonometric and basic geometric properties $a_i$ and $b_i$ are given as
\begin{align*}
    a_i = \frac{u_i\sin(\theta_i-\Omega_{\rm B})}{\sin\Omega_{\rm B}}, \quad b_i = \frac{u_i\tan\Omega_{\rm B}}{\sin\theta_i - \cos\theta_i \tan\Omega_{\rm B}}
\end{align*}
for $ \Omega_{\rm B} < \theta_i \leq 2\Omega_{\rm B}$, and for $\pi-2\Omega_{\rm B} \leq \theta_i \leq \pi - \Omega_{\rm B}$
\begin{align*}
    a_i = \frac{u_i\sin(\theta_i+\Omega_{\rm B})}{\sin\Omega_{\rm B}}, \quad b_i = \frac{u_i\tan\Omega_{\rm B}}{\sin\theta_i + \cos\theta_i \tan\Omega_{\rm B}}
\end{align*}

\textbf{Case 3:} $u_i < 0$, and $\pi < \theta_i \leq \pi + \Omega_{\rm B}$ or $2\pi - \Omega_{\rm B} \leq \theta_i <2\pi$. \par
This is the third case where a vehicle can create interference, which arises when the point of intersection is beyond ego radar, i.e., it lies on the negative y-axis.
Fig~\ref{fig:SC_3} illustrates this scenario when the intersection point lies behind the origin. The intersection point will lie behind the ego radar only if $\theta_i>\pi$, and if its value is present in a certain range. Using basic trigonometric manipulations, we get the value of $a_i$ as
\begin{align*}
    a_i = \frac{u_i}{\cos\theta_i+\sin\theta_i\cot\Omega_{\rm B}}
\end{align*}
for $\pi < \theta_i \leq \pi + \Omega_{\rm B}$, and for $2\pi - \Omega_{\rm B} \leq \theta_i <2\pi$
\begin{align*}
    a_i = \frac{-u_i}{\cos\theta_i+\sin\theta_i\cot\Omega_{\rm B}}
\end{align*}
Likewise, the value of $b_i$ will be infinity because the interfering line will exist infinitely within the radar sector. 

This completes the proof of theorem~\ref{th:theo2}
\end{proof} 
For each sub-case, the distance between the ego and interfering radar is given as
\begin{align}
    w_i=\sqrt{(u_i \pm v_i\cos\theta_i)^2 + (v_i\sin\theta_i)^2}.
    \label{eq:eq_w}
\end{align}
\section{Average number of \ac{PLCP} points in Radar sector}
\label{sec:AvgPLCP}
In this section, we calculate the average number of interfering radars that fall within the beam of the ego radar. This involves characterizing the average number of \ac{PLCP} points, $n(R)$, in the bounded ego radar sector with the maximum unambiguous range $R$ denoted by $\mathcal{C}_{(0,0)}^{+}\!(R)$. 

We identify $\mathcal{C}_{(0,0)}^{+}\!(R)$ as a collection of points $(p, q)$ in the Euclidean plane such that angle made by the vectors $(p, q)$ and $(0,-1)$ is greater than $\cos\Omega_{\rm B}$ and also, the distance between ego radar and $(p, q)$ is less than $R$, as shown below
\small
\begin{align*}
    \mathcal{C}_{(0,0)}^{+}\!(R) \!=\! \bigg\{(p,q) \colon \frac{q}{\sqrt{p^2 + q^2}} > \cos \Omega_{\rm B},\, \& \sqrt{p^2 + q^2} \leq R \bigg\}.
\end{align*}
\normalsize
Next, we note that $n(R)$  depends on the average length of \ac{PLP} lines falling inside $\mathcal{C}_{(0,0)}^{+}\!(R)$. For this, we consider the expectation of generating points over $\mathcal{D}_R \equiv [0,\pi) \times (0,R)$ which is a subset of $\mathcal{D}$ having $r<R$.
\begin{definition}
Thus the average number of \ac{PLCP} points falling inside $\mathcal{C}_{(0,0)}^{+}\!(R)$ is
\begin{align}
    n(R) = \lambda_{\rm P} l_{\rm avg} = \lambda_{\rm P} \mathbb{E}_{\mathcal{D}_{R}} \bigg[\left|L \cap \mathcal{C}_{(0,0)}^{+}(R)\right|_1\bigg]
    \label{eq:n_R}
\end{align}
where $|\cdot|_1$ is the Lebesgue measure in one dimension and $L$ is a line of the \ac{PLP}
\end{definition}
We determine the average length of a single line within $\mathcal{C}_{(0,0)}^{+}\!(R)$, as stated in following Lemma~\ref{lm:lemma_1}.
\begin{lemma}
\label{lm:lemma_1}
The length of a line parameterized by $(\theta,r)$ falling inside the bounded radar sector $\mathcal{C}_{(0,0)}^{+}\!(R)$ of half beamwidth $\Omega_{\rm B}$ and maximum unambiguous range $R$ is,
\begin{align}
    l=
    \begin{cases}
        l_1 \!=\! \frac{u\tan\Omega_{\rm B}}{\sin\theta + \cos\theta\tan\Omega_{\rm B}} + \sqrt{R^2 -u^2\sin^2\theta} - u\cos\theta; \\
        \hspace*{2cm}\mathrm{for}\; 0 \leq u \leq R, \;\mathrm{and}\; 0 < \theta \leq  \alpha_{\rm n}\\
        l_2 \!=\! \frac{2 u \sin\theta \tan\Omega_{\rm B}}{\sin^2\theta - \cos^2\theta\tan^2\Omega_{\rm B}}; 
        \hspace*{0.5cm}\mathrm{for}\; 0 \leq u \leq R\cos\Omega_{\rm B}, \\
        \hspace*{3.5cm}\mathrm{and}\; \alpha_{\rm n} < \theta \leq \pi-\alpha_{\rm n} \\
        l_3 \!=\! \frac{u\tan\Omega_{\rm B}}{\sin\theta - \cos\theta\tan\Omega_{\rm B}} + \sqrt{R^2 -u^2\sin^2\theta} + u\cos\theta;\\
        \hspace*{2cm}\mathrm{for}\; 0 \leq u \leq R \;\;\mathrm{and}\; \pi-\alpha_{\rm n} < \theta \leq \pi \\
        l_4 \!=\! 2\sqrt{R^2 -u^2\sin^2\theta}; \hspace*{0.6cm}\mathrm{for}\;  R\cos\Omega_{\rm B} \leq u \leq R \\
        \hspace*{3.5cm}\mathrm{and}\; \alpha_{\rm n} < \theta \leq \pi -\alpha_{\rm n}\\
        0; \hspace*{1.7cm}\mathrm{otherwise}
    \end{cases}
    \label{eq:lengthofline}
\end{align}
where $\alpha_{\rm n} = \arctan{\left(\frac{R\sin\Omega_{\rm B}}{|R\cos\Omega_{\rm B} - u|}\right)}$.
\end{lemma}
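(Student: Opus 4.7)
The plan is to reduce the problem to a one-dimensional chord computation by parameterising the line with an arc-length parameter on itself, then exhausting the ways in which that chord can exit the sector. Since $\mathbf{q}=(\theta,r)$ determines $L$ uniquely, I would work with the unit-speed representation $L=\{(-t\sin\theta,\,u+t\cos\theta):t\in\mathbb{R}\}$, where $u=r/\sin\theta$ is the $y$-intercept (i.e., the signed distance, along $L_1$, from the origin to the intersection with $L$). The point $t=0$ is inside the sector whenever $0\le u\le R$ and $\theta\in(0,\pi)$, so the chord length is simply $t_{\max}-t_{\min}$, where $t_{\max}$ and $t_{\min}$ are the values of $t$ at which $L$ exits the sector. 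The sector has three boundary pieces: the right edge $\{(s\sin\Omega_{\rm B},s\cos\Omega_{\rm B}):0\le s\le R\}$, the left edge $\{(-s\sin\Omega_{\rm B},s\cos\Omega_{\rm B}):0\le s\le R\}$, and the arc $\{p^2+q^2=R^2\}$.

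Next, I would compute the parameter values at which $L$ hits each of these boundaries. Substituting into the edge equations gives the two linear solutions
\[
t_L=\frac{u\tan\Omega_{\rm B}}{\sin\theta-\cos\theta\tan\Omega_{\rm B}},\qquad t_R=-\frac{u\tan\Omega_{\rm B}}{\sin\theta+\cos\theta\tan\Omega_{\rm B}},
\]
while substituting into $p^2+q^2=R^2$ gives the quadratic roots
\[
t_{A\pm}=-u\cos\theta\pm\sqrt{R^2-u^2\sin^2\theta},
\]
provided $u\sin\theta\le R$, which holds since $r\le R$. On the positive-$t$ side the exit occurs at $\min(t_L,t_{A+})$ (with edge or arc selected as appropriate), and symmetrically on the negative-$t$ side at $\max(t_R,t_{A-})$.

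The heart of the proof is identifying which boundary is hit first in each regime, and this is where the critical angle $\alpha_{\rm n}$ enters. Geometrically, $\alpha_{\rm n}$ is precisely the angle for which $L$ passes through one of the two sector corners $(\pm R\sin\Omega_{\rm B},R\cos\Omega_{\rm B})$; setting the slope $-\cot\theta$ of $L$ equal to the slope of the chord from $(0,u)$ to $(-R\sin\Omega_{\rm B},R\cos\Omega_{\rm B})$ yields $\tan\theta=R\sin\Omega_{\rm B}/|R\cos\Omega_{\rm B}-u|$, which is the stated $\alpha_{\rm n}$. With this, the four regimes are natural. If $\theta\le\alpha_{\rm n}$ the line is too ``steep'' to reach the left edge before the arc, so the exits are $t_{A+}$ and $t_R$, giving $l_1$. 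If $\alpha_{\rm n}<\theta<\pi-\alpha_{\rm n}$ and $u\le R\cos\Omega_{\rm B}$, both exits are on the two straight edges, giving $l_2=t_L-t_R$, which telescopes via the common numerator $u\tan\Omega_{\rm B}$ into the stated single fraction. The mirror case $\pi-\alpha_{\rm n}<\theta\le\pi$ gives $l_3=t_L-t_{A-}$ by symmetry. Finally, if $u>R\cos\Omega_{\rm B}$ then $(0,u)$ lies above the corner heights, so for $\alpha_{\rm n}<\theta<\pi-\alpha_{\rm n}$ neither edge is reachable and both exits are on the arc, yielding $l_4=t_{A+}-t_{A-}=2\sqrt{R^2-u^2\sin^2\theta}$.

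The genuinely routine part is the algebra turning differences of $t_L,t_R,t_{A\pm}$ into the closed forms in \eqref{eq:lengthofline}. The main obstacle — and the step I would spend the most care on — is the case bookkeeping: verifying that on each $(u,\theta)$ region the relevant $t$'s really have the correct signs (so $t_L>0$, $t_R<0$, etc.), that the selected candidate is genuinely the minimum or maximum, and that transitions across $\theta=\alpha_{\rm n}$, $\theta=\pi-\alpha_{\rm n}$, and $u=R\cos\Omega_{\rm B}$ match continuously. One further clean-up is handling the degenerate endpoints $\theta=0$, $u=0$, and $u=R$, where chord length collapses to zero and the formulas must be interpreted as limits; together with the exhaustive case split, this yields the ``otherwise'' branch and completes the proof.
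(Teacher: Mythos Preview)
Your proposal is correct and follows essentially the same approach as the paper: both compute the intersections of $L$ with the two straight edges and with the circular arc of the sector, split into the same four $(u,\theta)$ regimes, and obtain the chord length as the difference of the appropriate exit values. Your unit-speed parameterisation packages the paper's case-by-case $c+d$ computations into the single framework $t_{\max}-t_{\min}$ and adds the useful geometric identification of $\alpha_{\rm n}$ as the corner-hitting angle, but the substance is identical.
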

\begin{proof}
\textbf{Case 1:} $0 \leq u \leq R$ any $0 < \theta \leq  \alpha_{\rm n}$.\par
Fig~\ref{fig:al1} illustrates the case when intersecting line $L$ intersects one of the edge lines and circular curvature of the radar sector, and it will happen only when $0 < \theta \leq  \alpha_{\rm n}$. 
\begin{figure}[h]
    \centering
    \includegraphics[width = 0.45\textwidth]{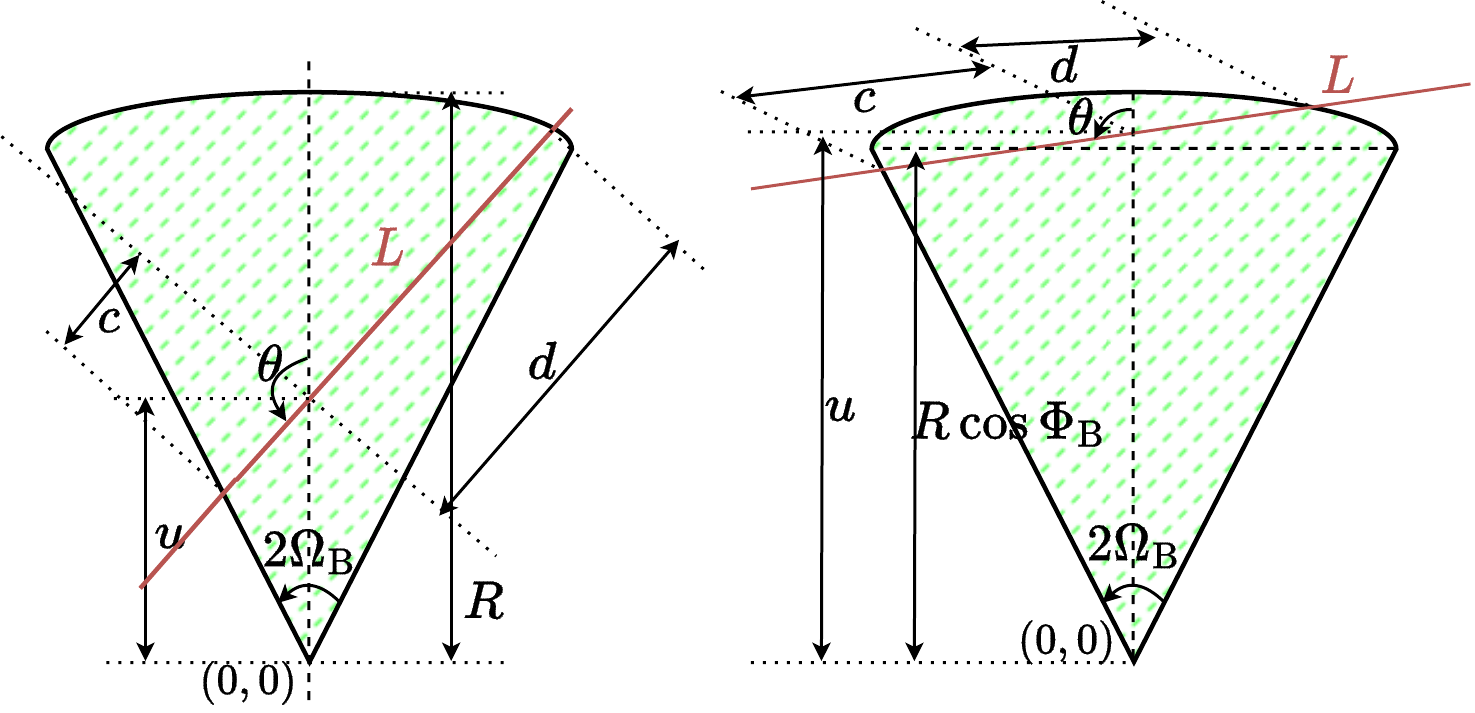}
    \caption{Case 1: When line $L$ is intersecting one of the edges and the circular arc.}
    \label{fig:al1}
\end{figure}
As shown in Fig~\ref{fig:al1}, the value of $l_1$ is composed to two variables $c$ and $d$ i.e., $l_1 = c+d$. we can find $c$ by simple trigonometric operations,
\begin{align*}
    u\tan\Omega_{\rm B} &= c\sin{\theta} + c\cos\theta\tan\Omega_{\rm B} \\
    c &= \frac{u\tan\Omega_{\rm B}}{\sin\theta + \cos\theta\tan\Omega_{\rm B}}
\end{align*}
Likewise, using Pythagoras' theorem, the length of $d$ is,
\begin{align*}
    R^2 &= (d\sin{\theta})^2 + (d\cos{\theta} + u)^2 \\
    d &= \sqrt{R^2 -u^2\sin^2\theta} - u\cos\theta
\end{align*}
Therefore total length of line $l_1$ in this case is as,
\begin{align}
    l_1 &= \frac{u\tan\Omega_{\rm B}}{\sin\theta + \cos\theta\tan\Omega_{\rm B}} + \sqrt{R^2 -u^2\sin^2\theta} - u\cos\theta
    \label{eq:case_1}
\end{align}

\textbf{Case 2:} $0 \leq u \leq R\cos\Omega_{\rm B}$, $\alpha_{\rm n} < \theta \leq \pi-\alpha_{\rm n}$.\par 
Fig~\ref{fig:al2} depicts the scenario in which the intersection distance $u$ is smaller than the $R\cos\Omega_{\rm B}$ and $\alpha_{\rm n} < \theta \leq \pi-\alpha_{\rm n}$.
\begin{figure}[t]
    \centering
    \includegraphics[width = 0.2\textwidth]{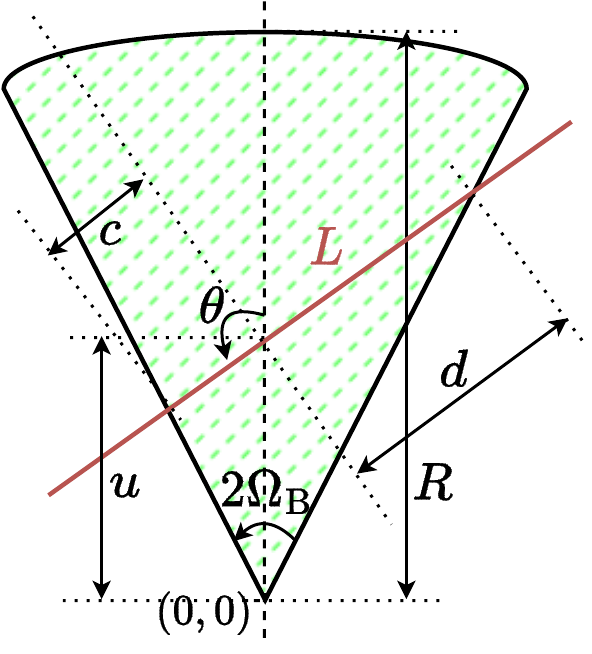}
    \caption{Case 2: When line $L$ is intersecting both one of the edges and not the circular arc.}
    \label{fig:al2}
\end{figure}
In this scenario, line $L$ intersects only the edge lines of the radar sector rather than intersecting the circular arc. The equation of $c$ and $d$ is as,
\begin{align*}
    c &= \frac{u\tan\Omega_{\rm B}}{\sin{\theta} + \cos{\theta}\tan\Omega_{\rm B}}
\end{align*}
and $d$ can be found as,
\begin{align*}
    \tan\Omega_{\rm B} &= \frac{d\sin{\theta}}{u + d\cos{\theta}} \\
    \therefore\, d &= \frac{u\tan\Omega_{\rm B}}{\sin{\theta} - \cos{\theta}\tan\Omega_{\rm B}}
\end{align*}
Therefore total length of line $l_2$ in this case is as,
\begin{align}
    l_2 &= u\tan\Omega_{\rm B}\frac{2\sin\theta}{\sin^2\theta - \cos^2\theta\tan^2\Omega_{\rm B}}
    \label{eq:case_2}
\end{align}

\textbf{Case 3:} $0 \leq u \leq R$, and $\pi-\alpha_{\rm n} < \theta \leq \pi$\par
This case is the same as Case 1, but the only difference is $\theta>\frac{\pi}{2}$. The line $L$ will intersect only with the edge line and the circular arc. The value of $c$ and $d$ are similar as of~\eqref{eq:case_1}, only $\theta$ is replaced with $\pi - \theta$. Thus,
\begin{align}
    l_3 &= \frac{u\tan\Omega_{\rm B}}{\sin\theta - \cos\theta\tan\Omega_{\rm B}} + \sqrt{R^2 -u^2\sin^2\theta} + u\cos\theta
    \label{eq:case_3}
\end{align}

\textbf{Case 4:} $R\cos\Omega_{\rm B} \leq u \leq R$, and $\alpha_{\rm n} < \theta \leq \pi -\alpha_{\rm n}$.\par
In this case, the line $L$ intersects only circular arc region at two places; this case can only arise when the interesting line distance $u$ is greater than $R\cos\Omega_{\rm B}$ and $\alpha_{\rm n} < \theta \leq \pi -\alpha_{\rm n}$. 
\begin{figure}[t]
    \centering
    \includegraphics[width = 0.25\textwidth]{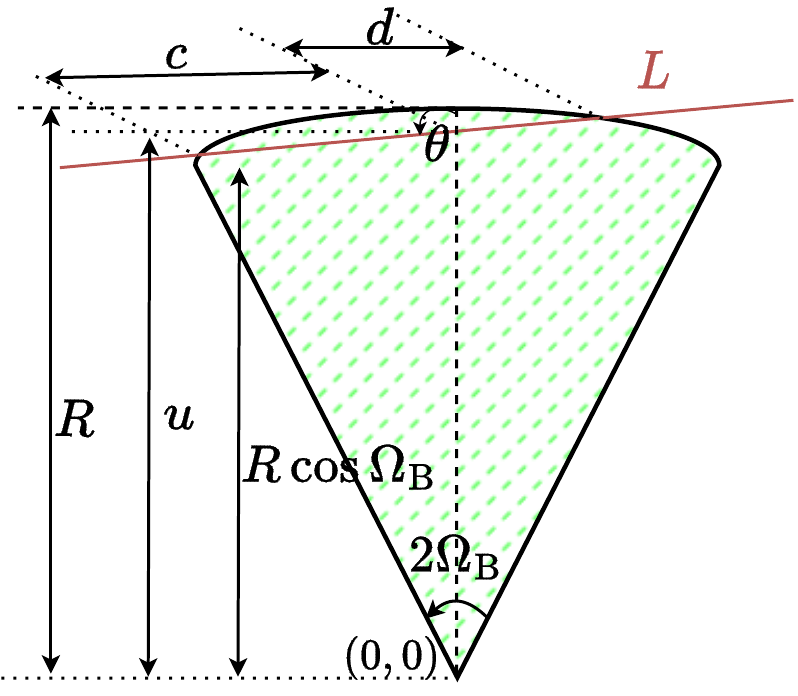}
    \caption{Case 4: When line $L$ intersects only the circular arcs, not the edges.}
    \label{fig:al4}
\end{figure}
Here, the total length $l_4$ can be found by first finding $c$ and $d$. Both the values of $c$ and $d$ are determined using Pythagoras theorem and basic trigonometric operations as,
\begin{align*}
    R^2 &= (c\sin{\theta})^2 + (u - c\cos{\theta})^2 \\
    \therefore\, c &= \sqrt{R^2 -u^2\sin^2\theta} + u\cos\theta
\end{align*}
and
\begin{align*}
    R^2 &= (d\sin{\theta})^2 + (d\cos{\theta} + u)^2\\
    \therefore\, d &= \sqrt{R^2 -u^2\sin^2\theta} - u\cos\theta
\end{align*}
Therefore,
\begin{align}
    l_4 &= 2\sqrt{R^2 -u^2\sin^2\theta}.
    \label{eq:case_4}
\end{align}
This completes the proof of theorem~\ref{th:theo1}.
\end{proof}
Following Lemma~\ref{lm:lemma_1}, we derive the average length of $\ac{PLP}$ lines falling inside $\mathcal{C}_{(0,0)}^{+}\!(R)$ in Theorem~\ref{th:theo2}.
\begin{theorem}
\label{th:theo2}
In a \ac{PLP}, the average length of line segments falling inside the bounded radar sector $\mathcal{C}_{(0,0)}^{+}\!(R)$  is
\begin{align*}
    l_{\rm avg} = \mathbb{E}_{\mathcal{D}_{R}} \bigg[\left|L \cap \mathcal{C}_{(0,0)}^{+}(R)\right|_1\bigg] = 2\pi\lambda_{\rm L}R\bar{l} 
\end{align*}
where
\begin{align*}
    \bar{l} &= \frac{1}{\pi R} \bigg[\int_0^{R}\!\! \int_0^{\alpha_{\rm n}} l_1\, {\rm d}\theta\, {\rm d}u + \int_0^{R\cos\Omega_{\rm B}}\!\! \int_{\alpha_{\rm n}}^{\pi-\alpha_{\rm n}} l_2\, {\rm d}\theta\, {\rm d}u \\
    &\hspace*{1cm} + \int_{0}^R\!\! \int_{\pi-\alpha_{\rm n}}^{\pi} l_3\, {\rm d}\theta\, {\rm d}u + \int_{R\cos\Omega_{\rm B}}^R\!\! \int_{\alpha_{\rm n}}^{\pi-\alpha_{\rm n}} l_4\, {\rm d}\theta\, {\rm d}u \bigg]
\end{align*}
\end{theorem}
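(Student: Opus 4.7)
My strategy is to reduce the expectation defining $l_{\rm avg}$ to a deterministic integral via Campbell's theorem for the underlying PPP of generating points, then to read off the integrand from Lemma~\ref{lm:lemma_1}. The generating points of the PLP form a homogeneous PPP of intensity $\lambda_{\rm L}$ on $\mathcal{D}$, and any line whose perpendicular distance from the origin exceeds $R$ cannot enter the disk of radius $R$ (let alone the sector), so it suffices to integrate over $\mathcal{D}_R = [0,\pi)\times(0,R)$:
\begin{align*}
  l_{\rm avg} \;=\; \lambda_{\rm L}\int_{\mathcal{D}_R} \bigl|L(\theta,r)\cap \mathcal{C}_{(0,0)}^+(R)\bigr|_1 \, dr\, d\theta.
\end{align*}

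Next I would reparameterize by the $y$-axis intercept $u = r/\sin\theta$, since Lemma~\ref{lm:lemma_1} expresses the chord length as a function of $(\theta,u)$, and absorb the Jacobian of this change of variables. The integrand is then nonzero on exactly the four geometric regimes identified in Lemma~\ref{lm:lemma_1}: (i) the chord clips one straight edge and exits through the arc, (ii) the chord enters and exits through the two straight edges, (iii) the mirror image of (i) for $\theta>\pi/2$, and (iv) the chord cuts the arc at two points. Splitting the integral along these four regions and substituting $l_1, l_2, l_3, l_4$ respectively produces the four inner double integrals inside $\bar{l}$; pulling the constant $2\pi\lambda_{\rm L} R$ out front and dividing the bracket by $\pi R$ gives the stated identity.

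\emph{Main obstacle.} The difficulty is not evaluating any single piece — the integrands are explicit — but setting up the partition of $(\theta,u)$-space correctly. The key transition curve is $\theta = \alpha_{\rm n}(u) = \arctan\!\bigl(R\sin\Omega_{\rm B}/|R\cos\Omega_{\rm B}-u|\bigr)$, which is precisely the inclination at which a chord through $(0,u)$ grazes the corner point $(R\sin\Omega_{\rm B},R\cos\Omega_{\rm B})$ where the straight edge of the sector meets the circular arc. I would need to verify carefully that the four regions in Lemma~\ref{lm:lemma_1} cover the entire support of the chord-length function, that the dichotomy $u \leq R\cos\Omega_{\rm B}$ (Case~2) versus $u \geq R\cos\Omega_{\rm B}$ (Case~4) correctly distinguishes the two possible configurations when $\alpha_{\rm n}(u) < \theta < \pi-\alpha_{\rm n}(u)$, and that no thin sliver of parameter space is double-counted or dropped as the integration is reassembled.
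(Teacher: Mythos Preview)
Your proposal is correct and follows essentially the same route as the paper: apply Campbell's theorem to the PPP of generating points restricted to $\mathcal{D}_R$, then feed in the four chord-length expressions $l_1,\dots,l_4$ from Lemma~\ref{lm:lemma_1} to obtain the bracketed integral and factor out $2\pi\lambda_{\rm L}R$. The paper's own argument is in fact terser than yours---it neither spells out the $(\theta,r)\to(\theta,u)$ reparameterization with its Jacobian nor verifies that the four cases exhaust the support---so the care you flag as the ``main obstacle'' is exactly the part left implicit in the original.
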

\begin{proof}
The proof follows by first determining the length of a single line with parameter $(\theta,r)$ inside $\mathcal{C}_{(0,0)}^{+}\!(R)$. In order to determine the average length of line segments inside the conic section, we use Eq.~\eqref{eq:case_1}, \eqref{eq:case_2}, \eqref{eq:case_3} and \eqref{eq:case_4}. Thus the average length of single line is given by $\bar{l}$ as
\begin{align*}
    \bar{l} &= \frac{1}{\pi R} \bigg[\int_0^{R}\!\! \int_0^{\alpha_{\rm n}} l_1\, {\rm d}\theta\, {\rm d}u + \int_0^{R\cos\Omega_{\rm B}}\!\! \int_{\alpha_{\rm n}}^{\pi-\alpha_{\rm n}} l_2\, {\rm d}\theta\, {\rm d}u \\
    &\hspace*{1cm} + \int_{0}^R\!\! \int_{\pi-\alpha_{\rm n}}^{\pi} l_3\, {\rm d}\theta\, {\rm d}u + \int_{R\cos\Omega_{\rm B}}^R\!\! \int_{\alpha_{\rm n}}^{\pi-\alpha_{\rm n}} l_4\, {\rm d}\theta\, {\rm d}u \bigg]
\end{align*}
By integrating $l_1$, $l_2$, $l_3$, and $l_4$ from the limits in which they exist, we divide by $\pi R$, because lines having $r>R$ can never lie inside $\mathcal{C}_{(0,0)}^{+}\!(R)$. The total average length of all the lines, when the generating points are poisson distributed in $\mathcal{D}_R$ can be found by using Campbell's theorem. This gives $l_{\rm avg}$ and completes the proof.   
\end{proof}
\section{Detection Success Probability}
\label{sec:RadarProb}
The detection \textit{success probability} at a threshold $\beta$ is defined as the \ac{CCDF} of SINR,
    $p_{\rm D}(\beta) = \mathbb{P}[\rm{SINR} > \beta]$.
This represents the probability that an attempted detection by the ego radar of the target located at a distance $R$ is successful. 
\begin{theorem}
\label{th:theo3}
For the network where locations of the vehicles are modeled as \ac{PLCP} $\Phi$, the detection success probability for an ego radar is
\begin{align}
    &p_{\rm D}(\beta) = e(R)\, \exp\Bigg[-\lambda_{\rm L} \int_0^{2\pi} \int_{\mathbb{R}^{+}} 1 - \\
    &\hspace*{0.5cm}\exp{\left(-\lambda_{\rm L} \int_a^b 1 - \frac{1}{1+\beta^\prime ||\mathbf{w}||^{-\alpha}} \,{\rm d}v\right)}{\rm d}\theta\, {\rm d}r\Bigg]
    \label{eq:main_eq}
\end{align}
where $\beta^\prime = \frac{4\pi\beta}{\bar{\sigma} R^{-2\alpha}}$, $e(R) = \exp{\left(\frac{-\beta N}{\bar{\sigma}\gamma PR^{-\alpha}}\right)}$, and $||\mathbf{w}||$ is given in~\eqref{eq:eq_w}.
\begin{proof}
From the definition of success probability, we have
\begin{align*}
    &p_{\rm D}(\beta) = \mathbb{P}\left(\frac{\gamma \sigma_{\mathbf{c}}PR^{-2\alpha}}{N + \sum_{\mathbf{w} \in \Phi} 4\pi \gamma P h_{\mathbf{w}} ||\mathbf{w}||^{-\alpha}} > \beta\right) \\
    &= \mathbb{P} \left(\sigma_{\mathbf{c}} > \frac{\beta(N + \sum_{\mathbf{w} \in \Phi} 4\pi \gamma P h_{\mathbf{w}} ||\mathbf{w}||^{-\alpha})}{\gamma PR^{-2\alpha}} \right) \\
    &\overset{(a)}{=} \mathbb{E}_{\Phi, h_{\mathbf{w}}} \left[\exp{\left(-\frac{\beta(N + \sum_{\mathbf{w} \in \Phi} 4\pi \gamma P h_{\mathbf{w}} ||\mathbf{w}||^{-\alpha})}{\bar{\sigma}\gamma PR^{-2\alpha}}\right)}\right] \\
    &= \exp{\left(\frac{-\beta N}{\bar{\sigma}\gamma PR^{-\alpha}}\right)} \\
    &\hspace*{1cm}\mathbb{E}_{\Phi, h_{\mathbf{w}}} \left[\exp{\left(-\frac{\beta \sum_{\mathbf{w} \in \Phi_{\rm L}} 4\pi \gamma P h_{\mathbf{w}} ||\mathbf{w}||^{-\alpha}}{\bar{\sigma}\gamma PR^{-2\alpha}}\right)}\right] \\
    &= e(R)\, \mathbb{E}_{\mathcal{D}} \vast(\prod_{(\theta,r) \in \mathcal{D}}\; \mathbb{E}_{\Phi_{\rm L}} \Bigg( \prod_{\mathbf{w} \in \Phi_{\rm L}} \\
    &\hspace*{2cm}\mathbb{E}_{h_{\mathbf{w}}} \left[\exp{\left(-\frac{4\pi \beta P h_{\mathbf{w}} ||\mathbf{w}||^{-\alpha}}{\bar{\sigma} PR^{-2\alpha}}\right)}\right] \Bigg)\vast) \\
    &\overset{(b)}{=} e(R)\, \mathbb{E}_{\mathcal{D}} \vast(\prod_{(\theta,r) \in \mathcal{D}}\; \mathbb{E}_{\Phi_{\rm L}} \Bigg( \prod_{\mathbf{w} \in \Phi_{\rm L}} \frac{1}{1+\beta^\prime ||\mathbf{w}||^{-\alpha}} \Bigg)\vast) \\
    &\overset{(c)}{=} e(R)\, \exp\Bigg[-\lambda_{\rm L} \int_0^{2\pi} \int_{\mathbb{R}^{+}} 1 - \\
    &\hspace*{0.8cm}\exp{\left(-\lambda_{\rm P} \int_a^b 1 - \frac{1}{1+\beta^\prime ||\mathbf{w}||^{-\alpha}} \,{\rm d}v\right)}{\rm d}\theta\, {\rm d}r\Bigg] \\
\end{align*}
Step (a) follows from the exponential distribution of $\sigma_{\mathbf{c}}$. Step (b) follows by taking the average over fading $h_{\mathbf{w}}$. Step (c) follows from the Laplace functional~\cite{dhillon2020poisson} of \ac{PLCP}.
\end{proof}
\end{theorem}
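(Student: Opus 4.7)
I would begin from the definition $p_{\rm D}(\beta) = \mathbb{P}[\mathrm{SINR} > \beta]$ and rearrange the SINR inequality to isolate the target RCS, obtaining the equivalent event $\sigma_{\mathbf{c}} > \beta(N + I)/(\gamma P R^{-2\alpha})$, where $I$ denotes the aggregate interference from $\Phi$. Since $\sigma_{\mathbf{c}}$ is exponential with mean $\bar{\sigma}$ and independent of both the \ac{PLCP} and the fading coefficients, its CCDF evaluated at this random threshold immediately produces $\mathbb{E}[\exp(-\beta(N+I)/(\bar{\sigma}\gamma PR^{-2\alpha}))]$; this is step (a) in the template of the excerpt.

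Next I would factor out the deterministic noise prefactor $e(R)$ and rewrite the remaining exponential of the interference sum as a product over the PLCP atoms $\mathbf{w} \in \Phi$. The fading factors $h_{\mathbf{w}}$ are i.i.d.\ unit-mean exponentials, so each per-atom conditional expectation is the Laplace transform of an exponential and collapses to $1/(1+\beta' \|\mathbf{w}\|^{-\alpha})$ with $\beta' = 4\pi\beta/(\bar{\sigma} R^{-2\alpha})$, giving step (b).

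For step (c), I would exploit the two-level structure of the \ac{PLCP} by successively applying two \ac{PGFL}s. Conditioned on the underlying \ac{PLP} $\mathcal{P}$, the radars on each line $L_i$ form an independent 1D \ac{PPP} of intensity $\lambda_{\rm P}$; however, by the beam-geometry analysis of Theorem~\ref{th:theo1}, only atoms whose along-line coordinate $v$ lies in $[a_i, b_i]$ contribute to $I$, since the remaining atoms fail to mutually illuminate the ego radar. The 1D PGFL restricted to this window yields $\exp(-\lambda_{\rm P} \int_a^b (1 - (1+\beta'\|\mathbf{w}\|^{-\alpha})^{-1})\,{\rm d}v)$, with $\|\mathbf{w}\|$ parameterized via Eq.~\eqref{eq:eq_w}. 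Averaging finally over the PLP via its PGFL on the representation space $\mathcal{D} \equiv [0,2\pi) \times \mathbb{R}^+$ with intensity $\lambda_{\rm L}$ produces the outer exponential in the claimed formula.

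The main obstacle is the piecewise geometric bookkeeping: the limits $a$ and $b$ inherited from Theorem~\ref{th:theo1} are case-dependent functions of $(\theta_i, r_i)$, and the range $\|\mathbf{w}\|$ depends on both $v$ and the line parameters through the nonlinear expression~\eqref{eq:eq_w}. Once it is justified that interference contributions are confined to $v \in [a,b]$ -- so that the inner 1D PGFL may be truncated to this window rather than extended over the entire line -- the two successive Poisson Laplace computations are mechanical and the result follows.
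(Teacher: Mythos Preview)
Your proposal is correct and follows essentially the same approach as the paper: isolate $\sigma_{\mathbf{c}}$ and apply its exponential CCDF (step (a)), average over the i.i.d.\ exponential fading to obtain the $1/(1+\beta'||\mathbf{w}||^{-\alpha})$ factors (step (b)), and then apply the nested PGFLs of the 1D \ac{PPP} on each line followed by the \ac{PLP} on $\mathcal{D}$ (step (c)). If anything, your explicit remark that the inner integral is truncated to $[a,b]$ because only radars in that window mutually illuminate the ego radar is a clarification the paper leaves implicit.
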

In section~\ref{sec:section_5}, we will see the performance analysis of success probability, $p_{\rm D}(\beta)$ for varying values of beamwidth, ranging distance, and intensity of vehicles.

\subsection{Number of Successful Target Detections}
Next, we determine the lower bound on the average number of successful detections, $n_{\rm D}$, which is the product of the probability of successful detection and the average number of \ac{PLCP} points present within the bounded radar sector created by the radar beam of ranging distance $R$,
\begin{align*}
    n_{\rm D} \geq n(R) \times p_{\rm D} = l_{\rm avg} \lambda_{\rm P} p_{\rm D}.
\end{align*}
We see that $n_{\rm D}$ is a lower bound because we are only considering targets which are at ranging distance less than $R$. 
Let us take into consideration three targets, one at a distance less than R, one at a distance greater than R, and one located precisely at R. Let their success probability of their detection be $p_{1}$, $p_{2}$, and $p_{\rm D}$ respectively. From~\eqref{eq:main_eq} we see that success probability decreases as $R$ increase, thus $p_1\!>$$p_{\rm D}\!>$$p_2$. Now the number of successful detection is $n_{\rm D} = p_1 + p_{\rm D} + p_2$. As we are considering targets only upto distance $R$, and $p_2 > 0$, therefore we can lower bound our number of successful detections as $n_{\rm D} \geq p_1 + p_{\rm D}$, which is further bounded to $n_{\rm D} \geq 2p_{\rm D}$, because $p_1 > p_{\rm D}$. In our case the average number of potential targets are $n(R)$ and $p_{\rm D}$ is the detection probability at distance $R$. Thus we lower bound the average number of successful detection as $ n_{\rm D} \geq n(R) \times p_{\rm D}$. 

$n_D$ is a reliable measure, offering an approximation of the minimum number of successful detections. Therefore, it serves as a highly significant instrument for the optimization of radar system design, 
and the facilitation of decision-making processes in the advancement and implementation of automotive radar technologies. 
\begin{figure*}[t]
\centering
\subfloat[]
{\includegraphics[width=0.32\textwidth]{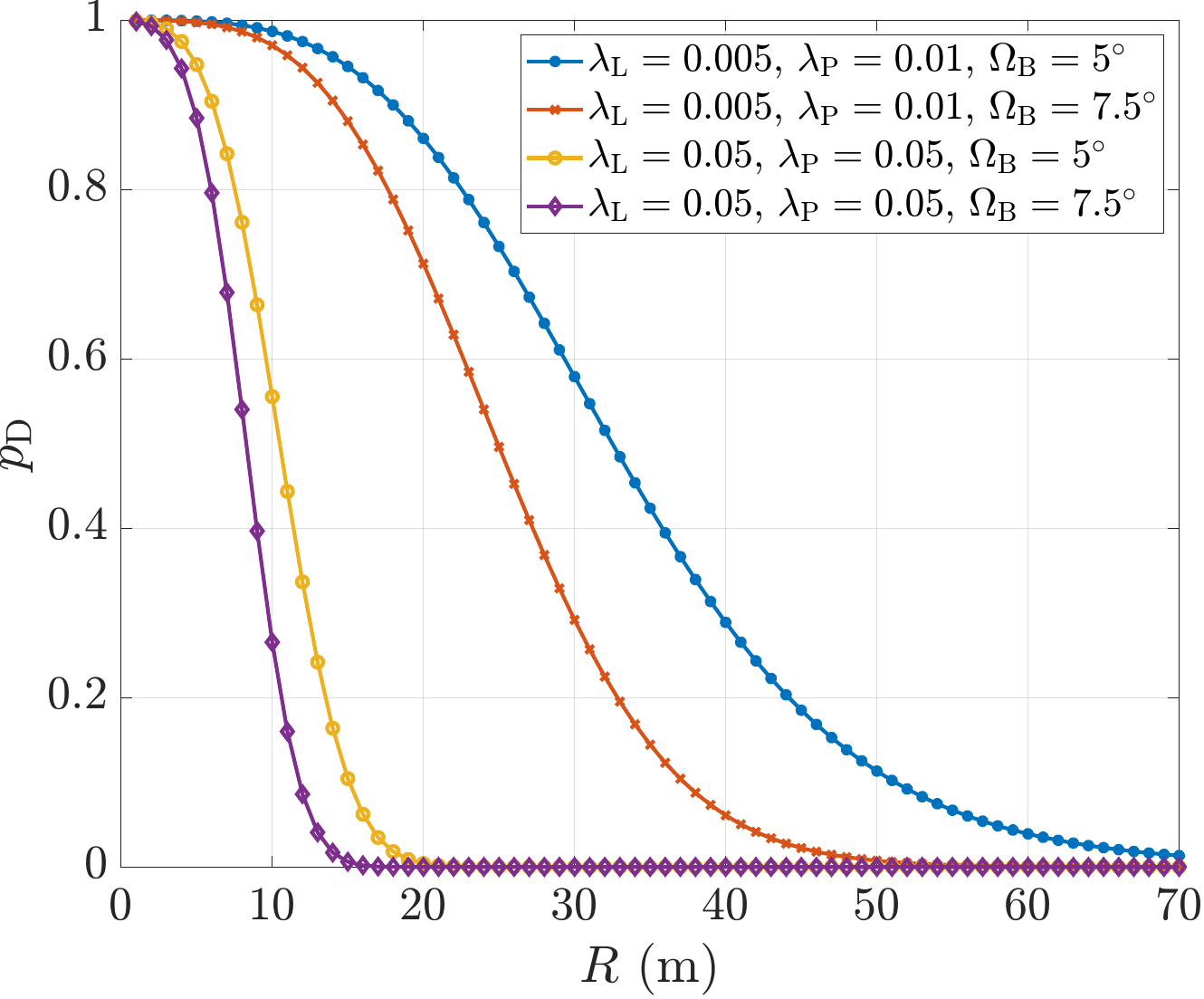}
\label{fig:result_2}}
\hfil
\subfloat[]
{\includegraphics[width=0.32\textwidth]{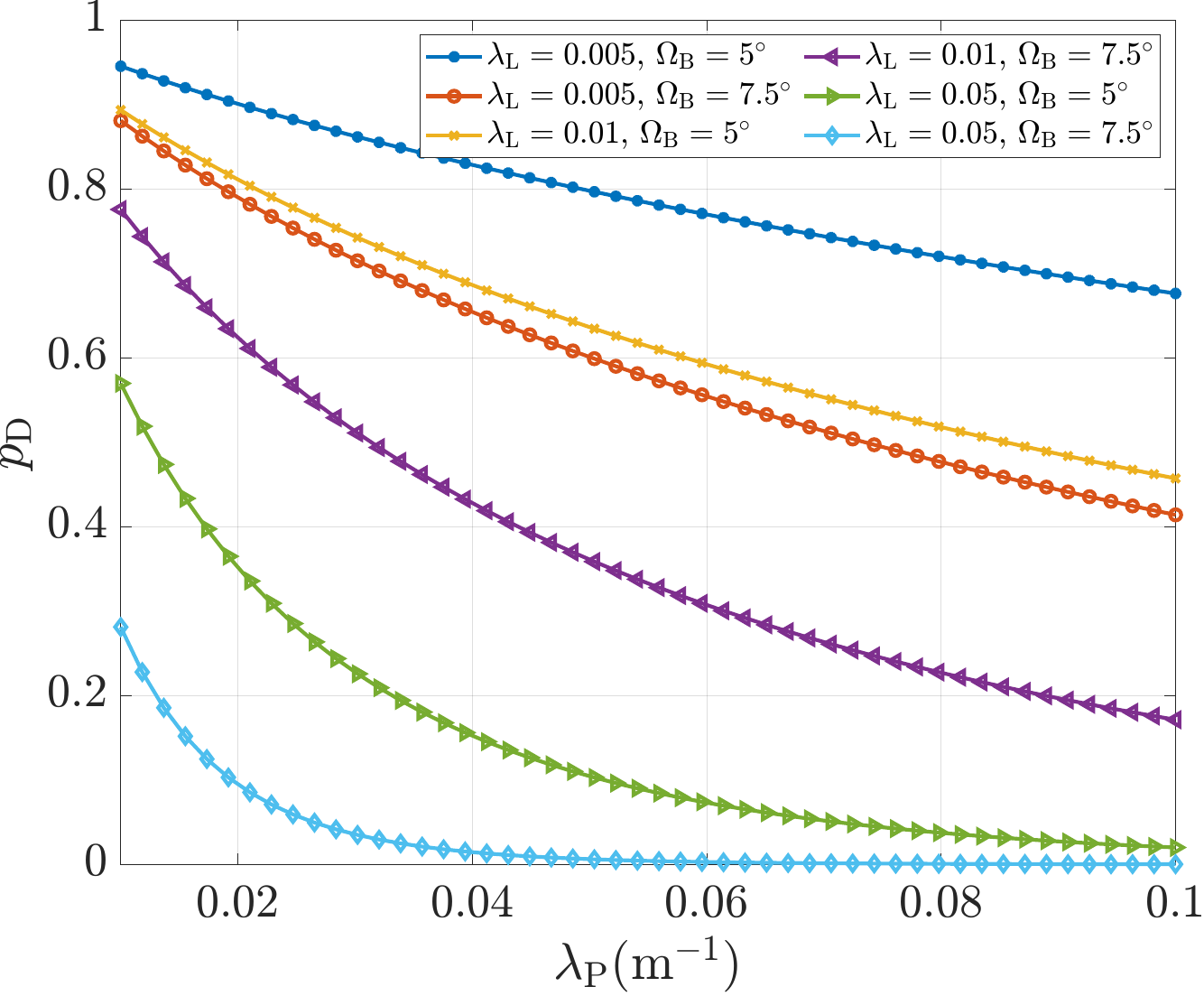}
\label{fig:result_4}}
\hfil
\subfloat[]
{\includegraphics[width=0.32\textwidth]{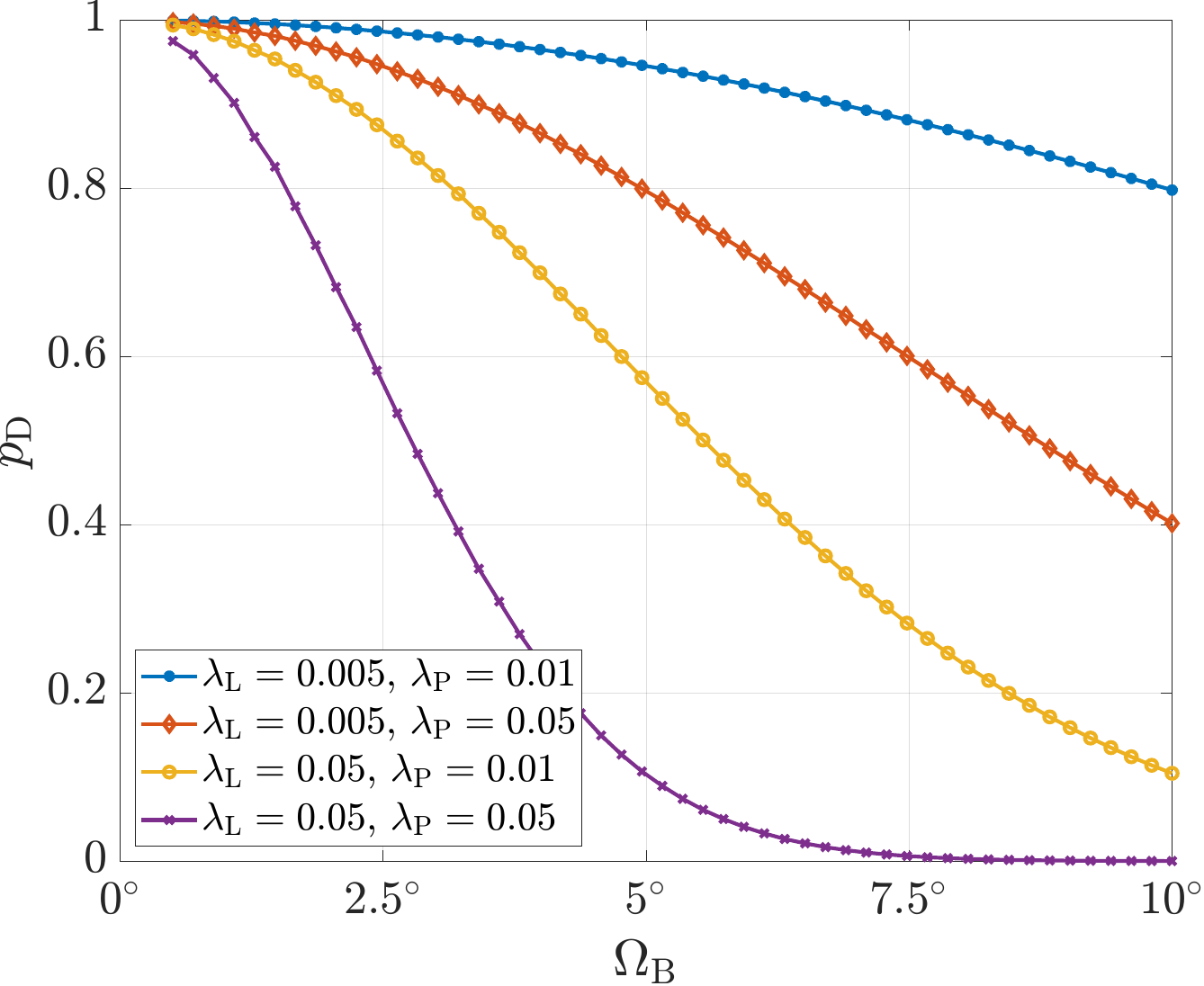}
\label{fig:result_5}}
\caption{Probability of successful detection $p_{\rm D}$ with respect to (a) $R$, (b) $\lambda_{\rm P}$, and (c) $2\Omega_{\rm B}$. (d) $n_{\rm D}$ with respect to $2\Omega_{\rm B}$.}
\label{fig:result_1_6} 
\end{figure*}

\section{Numerical Results and Discussion}
\label{sec:Results}
This section highlights some of the results from a typical automotive radar \ac{PLCP} framework. The radar parameters used for the numerical results taken from~\cite{series2014systems} are as, 
$P = 10$ dBm, $\bar{\sigma} = 30$ dBsm, $\alpha = 2$, $G_{\rm t} = G_{\rm r} = 10$ dBi, $f_c = 76.5$ GHz, $N_{\rm d} = -174$ dBm/Hz, $W = 25$ KHz, and $\beta = 10$ dB.
To better understand the performance of such a line process-based radar framework, we will see how the success probability $p_{\rm D}$ varies with system parameters like $R$, $\Omega_{\rm B}$, $\lambda_{\rm P}$ e.t.c. In all plots, $R=15 {\rm m}$ unless specified otherwise and the beamwidth is varied between $1^\circ$ and $20^\circ$. 

\subsection{Success Probability}
\label{sec:section_5}

We first plot $p_{\rm D}$ with respect to $R$, i.e., the ranging distance in Fig~\ref{fig:result_2}. The probability of successful detection decreases as $R$ increases irrespective of the value of beamwidth, $\lambda_{\rm L}$, and $\lambda_{\rm P}$. For smaller intensity values of streets and radars, the interference power received is smaller as compared to larger values of $\lambda_{\rm L}$ and $\lambda_{\rm P}$. Thus, we see that $p_{\rm D}$ is higher for smaller values of $\lambda_{\rm L}$ and $\lambda_{\rm P}$ due to less likelihood of strong interference from neighboring radars.

Next, we see in Fig~\ref{fig:result_4}, and~\ref{fig:result_5}, the success probability decreases with respect to $\lambda_{\rm P}$, and beamwidth respectively. $p_{\rm D}$ decreases as the two parameters increase because of the increase in the total number of radar interferers and the decrease of the distance between the nearest interfering radar and transmitting ego beam at the origin. In Fig~\ref{fig:result_4}, for different values of $\lambda_{\rm L}$ and beamwidth we see $p_{\rm D}$ decreases with $\lambda_{\rm P}$. For fixed value of beamwidth, $p_{\rm D}$ decreases as $\lambda_{\rm L}$ increases from $0.005$ to $0.05$ per ${\rm m}^{2}$. The effect of intensity on $p_{\rm D}$ showcases that by increasing the intensity of either vehicles and streets, we increase the number of interferers by a larger factor as compared to only an increase in the values of $\Omega_{\rm B}$. In Fig~\ref{fig:result_5} we now illustrate the effect of beamwidth on $p_{\rm D}$. We see that as the value of beamwidth increases, $p_{\rm D}$ decreases non-linearly. It showcases that having fewer interfering points results in higher successful detection probability, i.e., $\lambda_{\rm L} = 0.005 \,{\rm m}^{-2}$ and $\lambda_{\rm P} = 0.01 \,{\rm m}^{-1}$, as compared to when $\lambda_{\rm L} = 0.01 \,{\rm m}^{-2}$ and $\lambda_{\rm P} = 0.1 {\rm m}^{-1}$. 

Fig~\ref{fig:result_6} shows the plot of $n_{\rm D}$ with respect to beamwidth. From the plot, it is evident that an increase in the intensity of lines and the beamwidth leads to an increase in the average length of line segments falling in the radar sector. However, with the increase in $\lambda_{\rm L}$ and beamwidth, the probability of successful detection decreases due to increased interference. Consequently, a trade-off exists between $p_{\rm D}$ and the average length, as illustrated by Fig~\ref{fig:result_6} that initially exhibits an upward trend followed by a subsequent decline. Thus, there exists an optimal value of beamwidth, for which $n_{\rm D}$ is maximized.

\subsection{Optimal Parameter}
In Fig~\ref{fig:result_6}, we saw that there is the optimal value of beamwidth for which $n_{\rm D}$ is maximized i.e     $\Omega_{\rm B}^\ast= \arg\max n_{\rm D}$. In Fig~\ref{fig:result_7} we plot the optimal beamwidth $\Omega_{\rm B}^\ast$ w.r.t $\lambda_{\rm P}$ for 3 different values of $\lambda_{\rm L}$. The figure illustrates that the optimal beamwidth decreases with an increase in $\lambda_{\rm P}$, eventually saturating. In our experiments, as mentioned earlier, the value of beamwidth is varied from $1^\circ$ to $20^\circ$, and we see that optimum values for each value of $\lambda_{\rm P}$ lies in between these values of beamwidth. We also see that optimal beamwidth decreases as $\lambda_{\rm L}$ increases. Therefore, we conclude that the optimal beamwidth is lower when there is a higher density of interfering radar beams.

Likewise, we find the optimal beamwidth as a function of  $R$. Fig~\ref{fig:result_8} illustrates the ideal beamwidth with respect to $R$ for various values of $\lambda_{\rm L}$ and $\lambda_{\rm P}$. We see that for smaller values of $R$, irrespective of the intensity of streets and radars, the optimal beamwidth remains constant at $20^\circ$. For smaller values of $\lambda_{\rm L}$ and $\lambda_{\rm P}$, optimal beamwidth remains constant for a larger range as compared to higher values of $\lambda_{\rm L}$ and $\lambda_{\rm P}$. In Fig.~\ref{fig:result_6}, we saw that for $\lambda_{\rm L} = 0.01 \,{\rm m}^{-2}$ and $\lambda_{\rm P} = 0.01 \,{\rm m}^{-1}$, $n_{\rm D}$ increases with increase in beamwidth, and does not decrease for the given range of beamwidth. Our analysis shows that the number of successful detections keeps increasing for smaller values of ranging distance $R$ with the increasing beamwidth. This is because with very few interferers present and a small $R$, even if we increase beamwidth, the effect of interferers will not be prominent enough to cause the $n_{\rm D}$ to decrease. This is the same reason when $\lambda_{\rm L} = 0.01 \,{\rm m}^{-2}$ and $\lambda_{\rm P} = 0.01 \,{\rm m}^{-1}$, optimal beamwidth is constant at $20^\circ$ for a larger range of $R$, as compared to $\lambda_{\rm L} = 0.05 \,{\rm m}^{-2}$ and $\lambda_{\rm P} = 0.05 \,{\rm m}^{-1}$. Eventually, optimal beamwidth saturates to some value, but we see that for higher intensity values of interfering radars, optimal beamwidth saturates quickly compared to lower values. By increasing the area of the radar sector with the increase in $R$, the effect of interfering signal power received for higher $\lambda_{\rm L}$ and $\lambda_{\rm P}$ causes the optimal beam to saturate to smaller value quickly as compared when $\lambda_{\rm L}$ and $\lambda_{\rm P}$ are smaller. 


\begin{figure*}[t]
\centering
\subfloat[]
{\includegraphics[width=0.32\textwidth]{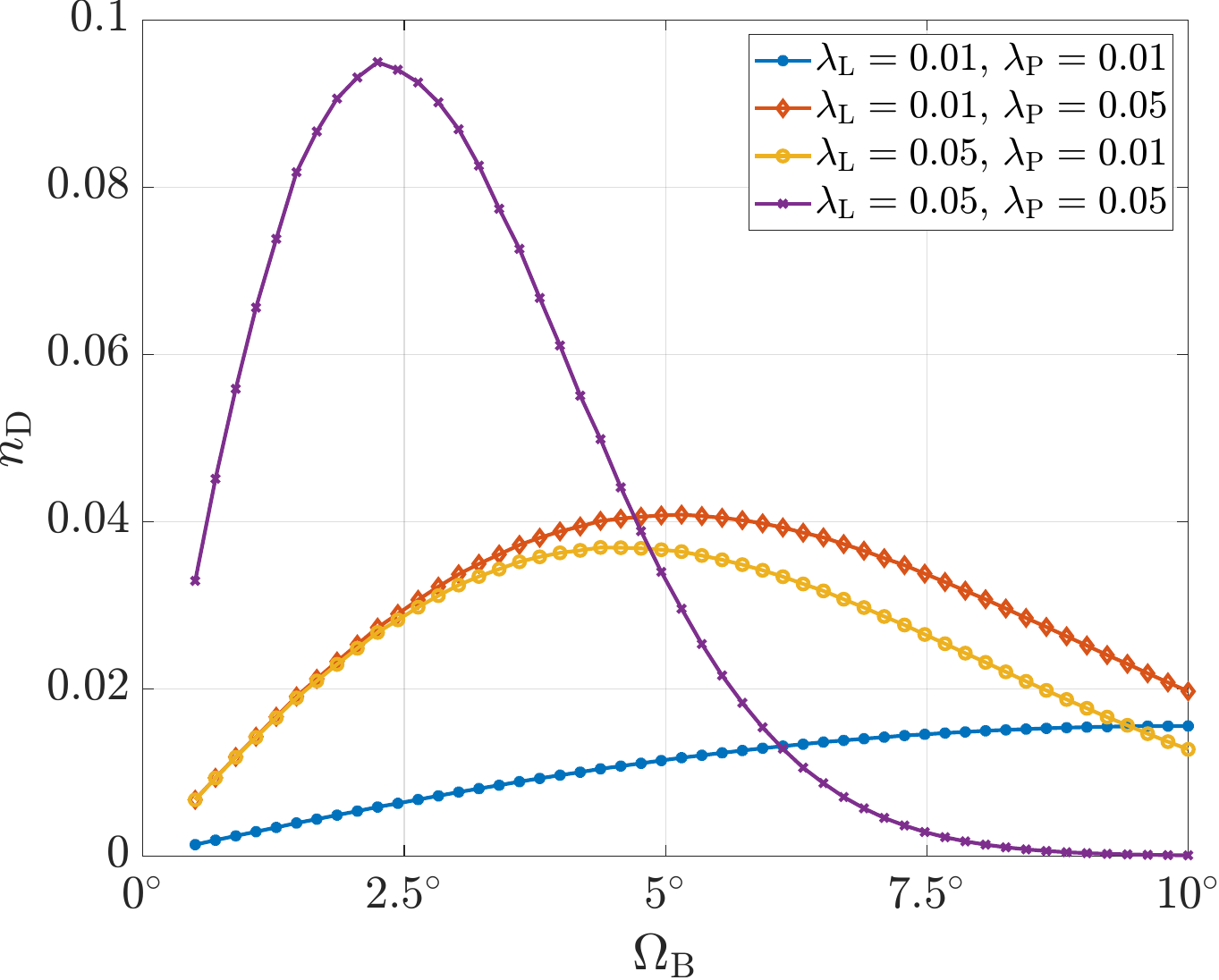}
\label{fig:result_6}}
\hfil
\subfloat[]
{\includegraphics[width=0.32\textwidth]{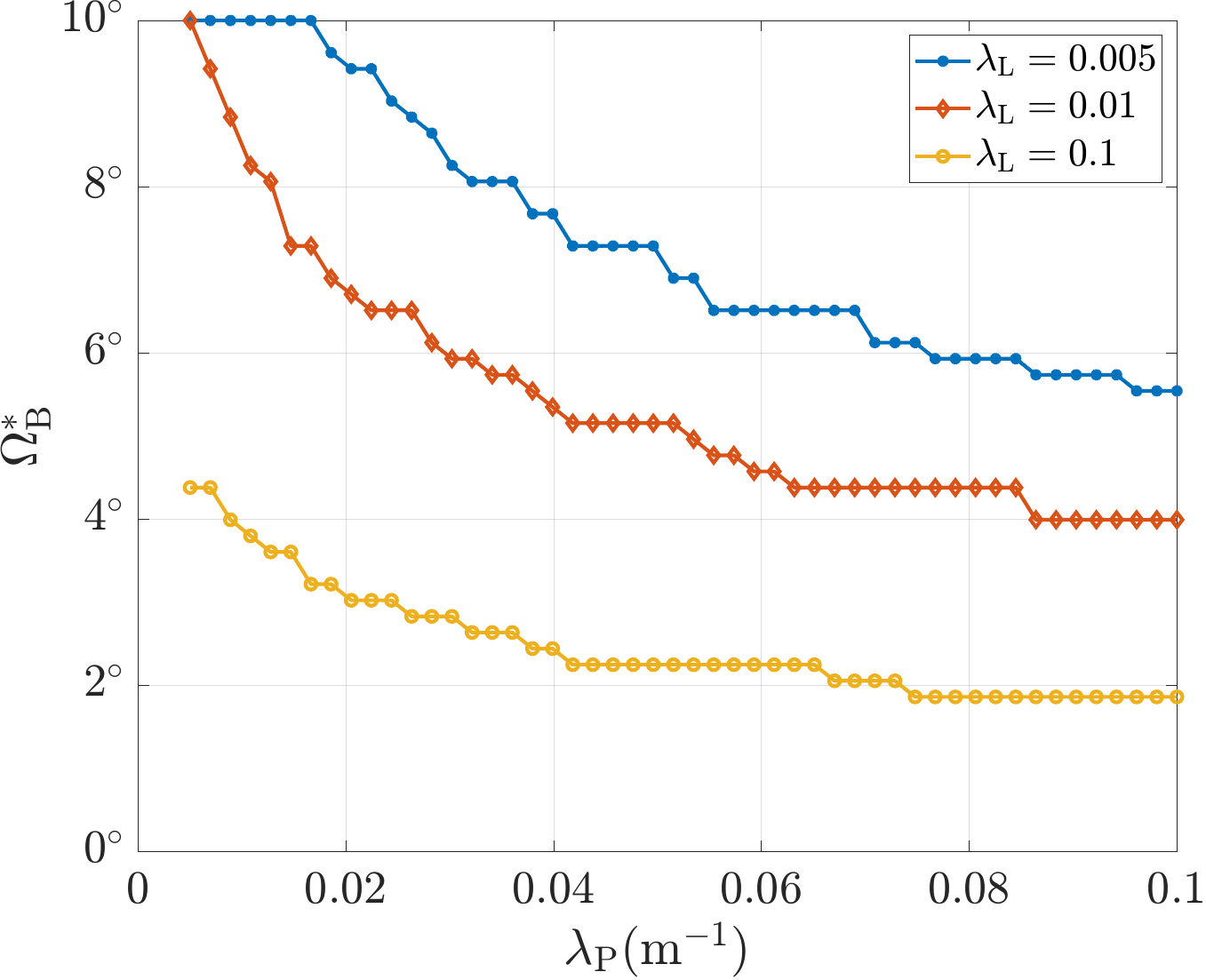}
\label{fig:result_7}}
\hfil
\subfloat[]
{\includegraphics[width=0.32\textwidth]{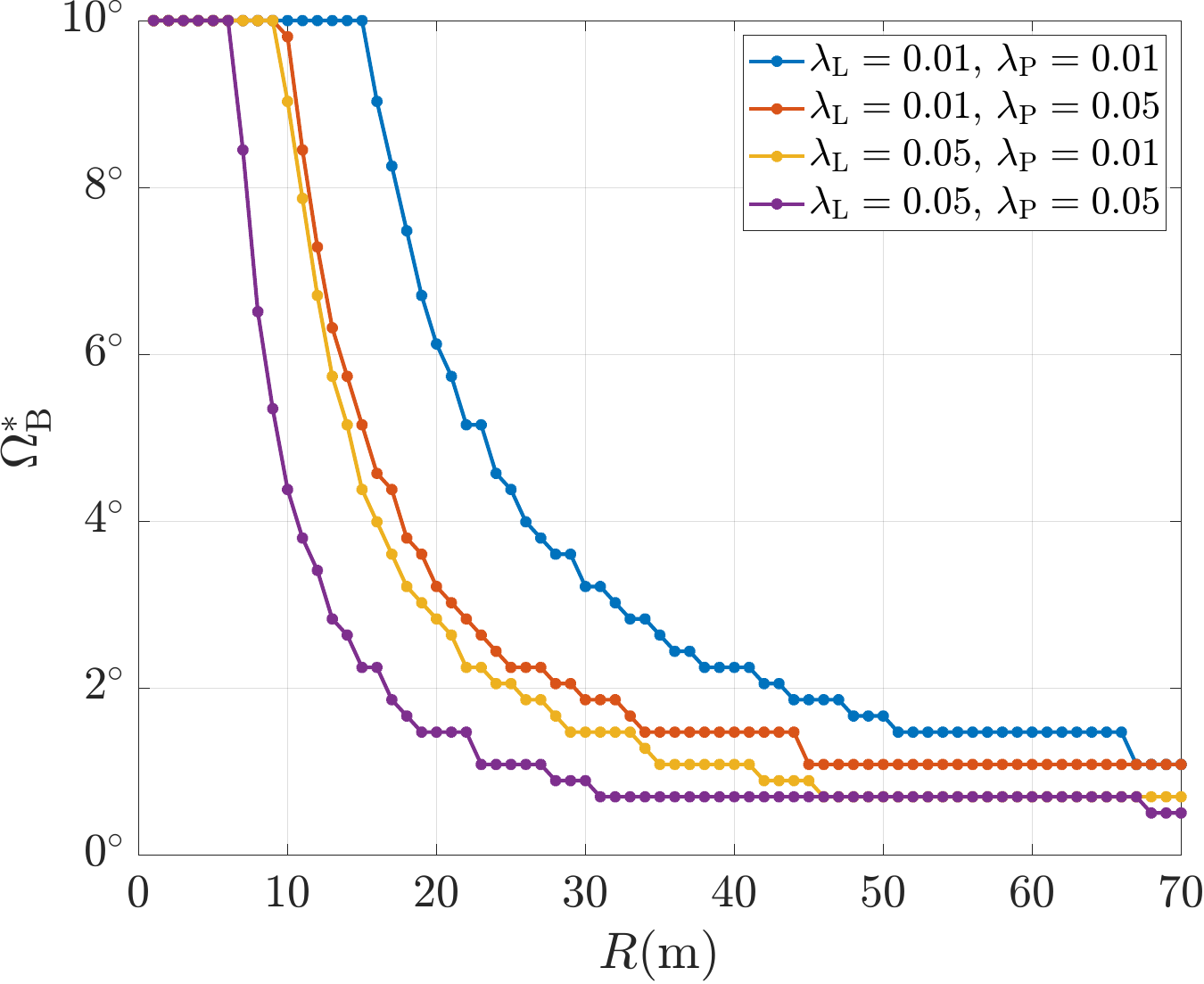}
\label{fig:result_8}}
\caption{Optimal beamwidth with respect to (a) $\lambda_{\rm P}$, and (b) $R$}
\label{fig:result_7_8} 
\end{figure*}

\section{Conclusion}
\label{sec:Conclusion}
This study presents an innovative methodology for modeling vehicular radar systems, considering the spatial arrangement of road networks in urban settings. Applying a Poisson line Cox process (PLCP) model enables the examination of interference attributes, resulting in the formulation of significant findings for system design. Beamwidth optimization is considered a critical component that serves as a basis for improving the efficiency and effectiveness of vehicles equipped with radar systems. The findings suggest that there is a trade-off between the width of the radar beam and the number of successful detections. 
The results of this investigation offer prospective applications in collision avoidance, lane-keeping assistance, and pedestrian identification. 

\bibliographystyle{ieeetr}
\bibliography{references}

\begin{thebibliography}{10}

\bibitem{al2017stochastic}
A.~Al-Hourani, R.~J. Evans, S.~Kandeepan, B.~Moran, and H.~Eltom, ``Stochastic geometry methods for modeling automotive radar interference,'' {\em IEEE Transactions on Intelligent Transportation Systems}, vol.~19, no.~2, pp.~333--344, 2017.

\bibitem{munari2018stochastic}
A.~Munari, L.~Simi{\'c}, and M.~Petrova, ``Stochastic geometry interference analysis of radar network performance,'' {\em IEEE Communications Letters}, vol.~22, no.~11, pp.~2362--2365, 2018.

\bibitem{ren2018performance}
P.~Ren, A.~Munari, and M.~Petrova, ``Performance tradeoffs of joint radar-communication networks,'' {\em IEEE Wireless Communications Letters}, vol.~8, no.~1, pp.~165--168, 2018.

\bibitem{park2018analysis}
J.~Park and R.~W. Heath, ``Analysis of blockage sensing by radars in random cellular networks,'' {\em IEEE Signal Processing Letters}, vol.~25, no.~11, pp.~1620--1624, 2018.

\bibitem{fang2020stochastic}
Z.~Fang, Z.~Wei, X.~Chen, H.~Wu, and Z.~Feng, ``Stochastic geometry for automotive radar interference with rcs characteristics,'' {\em IEEE Wireless Communications Letters}, vol.~9, no.~11, pp.~1817--1820, 2020.

\bibitem{ram2020estimating}
S.~S. Ram, G.~Singh, and G.~Ghatak, ``Estimating radar detection coverage probability of targets in a cluttered environment using stochastic geometry,'' in {\em 2020 IEEE International Radar Conference (RADAR)}, pp.~665--670, IEEE, 2020.

\bibitem{ram2021optimization}
S.~S. Ram, G.~Singh, and G.~Ghatak, ``Optimization of radar parameters for maximum detection probability under generalized discrete clutter conditions using stochastic geometry,'' {\em IEEE Open Journal of Signal Processing}, vol.~2, pp.~571--585, 2021.

\bibitem{ram2022estimation}
S.~S. Ram and G.~Ghatak, ``Estimation of bistatic radar detection performance under discrete clutter conditions using stochastic geometry,'' in {\em IEEE Radar Conference 2022}, 2022.

\bibitem{ram2022optimization}
S.~S. Ram, S.~Singhal, and G.~Ghatak, ``Optimization of network throughput of joint radar communication system using stochastic geometry,'' {\em Frontiers in Signal Processing}, vol.~2, p.~835743, 2022.

\bibitem{singhal2023leo}
S.~Singhal, S.~K. Biswas, and S.~S. Ram, ``Leo/meo-based multi-static passive radar detection performance analysis using stochastic geometry,'' in {\em 2023 IEEE Radar Conference (RadarConf23)}, pp.~1--6, IEEE, 2023.

\bibitem{schipper2015simulative}
T.~Schipper, S.~Prophet, M.~Harter, L.~Zwirello, and T.~Zwick, ``Simulative prediction of the interference potential between radars in common road scenarios,'' {\em IEEE Transactions on Electromagnetic Compatibility}, vol.~57, no.~3, pp.~322--328, 2015.

\bibitem{ghatak2022radar}
G.~Ghatak, S.~S. Kalamkar, and Y.~Gupta, ``Radar detection in vehicular networks: Fine-grained analysis and optimal channel access,'' {\em IEEE Transactions on Vehicular Technology}, vol.~71, no.~6, pp.~6671--6681, 2022.

\bibitem{dhillon2020poisson}
H.~S. Dhillon and V.~V. Chetlur, {\em Poisson line Cox process: Foundations and applications to vehicular networks}.
\newblock Springer, 2020.

\bibitem{shnidman2003expanded}
D.~Shnidman, ``Expanded swerling target models,'' {\em IEEE Transactions on Aerospace and Electronic Systems}, vol.~39, no.~3, pp.~1059--1069, 2003.

\bibitem{series2014systems}
M.~Series, ``Systems characteristics of automotive radars operating in the frequency band 76--81 ghz for intelligent transport. systems applications,'' {\em Recommendation ITU-R, M}, pp.~2057--1, 2014.

\end{thebibliography}

\end{document}